\newcommand{\blind}{0}
\newcommand*{\addFileDependency}[1]{
	\typeout{(#1)}
	\@addtofilelist{#1}
	\IfFileExists{#1}{}{\typeout{No file #1.}}
}
\newtheorem{theorem}{Theorem}
\newtheorem{proposition}[theorem]{Proposition}%
\newtheorem{assumption}{Assumption}
\def\bo10z{\mbox{\bo10math $Z$}}
\def\lbo10z{\mbox{\bo10math $z$}}
\def\bo10y{\mbox{\bo10math $Y$}}
\def\lbo10y{\mbox{\bo10math $y$}}
\def\bo10u{\mbox{\bo10math $u$}}
\def\E{\mathbb{E}}
\def\full{\mathrm{full}}
\def\obs{\mathrm{obs}}
\def\pr{\textnormal{pr}}
\def\P{\mathbb{P}}
\def\bo10d{\mbox{\bo10math $d$}}
\newcommand{\MR}{\textnormal{tr}}
\newcommand\Pn{{\mathbb{P}_n}}
\def\T{{ \mathrm{\scriptscriptstyle T} }}
\newcommand*{\indep}{%
	\mathbin{%
		\mathpalette{\@indep}{}%
	}%
}
\newcommand*{\nindep}{%
	\mathbin{
		\mathpalette{\@indep}{\not}
	}%
}
\newcommand*{\@indep}[2]{%
	\sbox0{$#1\perp\m@th$}
	\sbox2{$#1=$}
	\sbox4{$#1\vcenter{}$}
	\rlap{\copy0}
	\dimen@=\dimexpr\ht2-\ht4-.2pt\relax
	\kern\dimen@
	{#2}%
	\kern\dimen@
	\copy0 
}
\begin{document}

	\def\spacingset#1{\renewcommand{\baselinestretch}%
		{#1}\small\normalsize} \spacingset{1}

	\newcounter{savecntr}
\newcounter{restorecntr}
	
	\if0\blind
	{
		\title{\bf Multiply robust estimation of causal  effects using linked data}
		\author{Shanshan Luo\setcounter{savecntr}{\value{footnote}}\thanks{Both authors contributed equally to this work.}
  \textsuperscript{1},     Yechi Zhang\setcounter{restorecntr}{\value{footnote}}%
  \setcounter{footnote}{\value{savecntr}}\footnotemark
  \setcounter{footnote}{\value{restorecntr}} \textsuperscript{2}, and  
			Wei Li\textsuperscript{2} \\\\
			\textsuperscript{1} School of Mathematics and Statistics, \\Beijing Technology and Business University\\\\
			\textsuperscript{2} Center for Applied Statistics and School of Statistics, \\Renmin University of China\\
 }
 
		\date{}
		
		\maketitle
	} \fi
	
	\if1\blind
	{
		\bigskip
		\bigskip
		\bigskip
		\begin{center}
			{\LARGE\bf Title}
		\end{center}
		\medskip
	} \fi
	
	\bigskip
	\begin{abstract}
		Unmeasured confounding presents a common challenge in observational studies, potentially making standard causal parameters unidentifiable without additional assumptions. 
		Given the increasing availability of diverse data sources, exploiting data linkage offers a potential solution to mitigate unmeasured confounding within a primary study of interest. 
		However, this approach often introduces selection bias, as data linkage is feasible only for a subset of the study population.  To address this concern, we explore three nonparametric identification strategies under the assumption that a unit's inclusion in the linked cohort is determined solely by the observed confounders, while acknowledging that the ignorability assumption may depend on some partially unobserved covariates. 
		The existence of multiple identification strategies {motivates} the development of estimators that effectively capture distinct {components} of the observed data distribution.  Appropriately combining these estimators yields triply robust estimators for the average treatment effect. These estimators remain consistent if at least one of the three distinct parts of the observed data law is correct. Moreover, they are locally efficient if all the  models are correctly specified.  We evaluate the proposed  estimators using simulation studies and real data analysis.
	\end{abstract}
	
	\noindent%
	{\it Keywords:} Causal inference;  Linked data;  Missing confounder; Two-phase sampling.
	\vfill
	
	\newpage
	\spacingset{1.8} 
\section{Introduction}
	\label{sec:intro}
	
	Unmeasured confounding remains a persistent challenge within observational studies, leading to {biased} estimations of causal parameters. In the current era of big data, the increasing availability of diverse  data sources offers potential remedies. Among these, leveraging data linkage emerges as a  {promising approach} to mitigate the impact of unmeasured confounding in a primary study of interest. For instance, in healthcare research, the linkage of a claims database from a health plan with an electronic health record database from a delivery system can yield richer patient data. The resulting linked cohort, comprising patients present in both data sources, presents an  opportunity to enhance estimation by incorporating pivotal confounding factors.
	However,  the data linkage approach may  introduce selection bias. This arises from the fact that studies conducted within linked databases are often  {restricted} to a subset of the primary study population. If this subset fails to adequately represent the entire population, constraining analysis solely to the linked data might 
	{invalidate} findings and {limit} result generalizability \citep{sun2022use}.

	The problem of integrating inferences from primary and supplementary datasets to control for unmeasured confounding has previously received substantial attention, for example, within the context of  two-phase sampling designs \citep{chatterjee2003pseudoscore,wang2009causal}. When more detailed confounding information becomes available from an alternative study, 
	\citet{sturmer2005adjusting} employed the regression calibration technique to mitigate estimation bias in a large healthcare database with incomplete confounder data.
	Similarly, in a comparable scenario,  \citet{mccandless2012adjustment} presented a Bayesian framework for adjusting missing confounders  through the incorporation of external data and propensity scores. Furthermore, \citet{lin2014adjustment}  developed a two-stage calibration method that effectively integrates propensity scores to {summarize} confounding information, ultimately merging results from both the main and validation datasets.
	However, these methods primarily focus on regression parameters that might not perfectly align with the causal parameters of interest, additionally necessitating validation dataset representative of the primary   population. {In a similar spirit  to the  calibration approach  by  \citet{lin2014adjustment}}, \citet{yang2019combining}   introduced a general framework for estimation of causal effects combining the primary dataset with unmeasured confounders and a smaller  validation dataset that contributes  additional confounding information.
	For scenarios involving heterogeneity between the two data sources, \citet{sun2022use} proposed an estimator employing {two} weightings to estimate the causal effect. Within this scheme, the inverse probability of treatment weights was utilized to deal with confounding bias, while inverse probability of selection weights was applied to address selection bias within the linked cohort. However,
	the estimators proposed by \citet{yang2019combining} and \citet{sun2022use} may not be the most efficient, particularly in situations with heterogeneity between the two datasets.

	This paper focuses on the semiparametric efficient estimation of the average treatment effect (ATE) using a linked database that may exhibit heterogeneity from the primary   population. We establish three nonparametric identification formulas and subsequently develop three 
	corresponding 
	semiparametric   estimators. These estimators operate under the assumption of ignorable treatment assignment which allows for dependence on some partially unobserved confounders,  and the assumption 
	that a unit's inclusion in the linked data hinges solely on observed covariates.
	The presence of multiple estimators for the same causal estimand suggests the potential for their combination  to enhance estimation. We  derive the efficient influence function (EIF) for ATE within a nonparametric model and introduce a semiparametric estimator demonstrating the triply robust property \citep{bickel1993efficient}. This estimator remains consistent when considering the union of three distinct components of  likelihood and achieves local efficiency if all these models are correctly specified. Furthermore, our proposed estimator retains its consistency and asymptotic normality even when leveraging flexible models derived from machine learning approaches \citep{chernozhukov2018double}.   
	These findings extend the concept of doubly robust estimation for ATE within observational studies featuring complete-case analysis \citep{bang2005doubly} and align with the  {principles} of triply robust estimators in other causal inference domains \citep{wang2018bounded, Jiang2022multiply}. 
	
	Moreover, we {explore}  the application of the proposed estimator to identify optimal sampling designs by minimizing the {asymptotic} variance under specific cost constraints. Additionally, we consider  semiparametric efficient estimations of the causal risk ratio for binary outcomes. We also discuss the connection between the proposed method and existing approaches for handling missing data, wherein individuals outside the linked cohort can be viewed as those with missing values for certain confounding variables. Despite various methods  proposed for estimating causal effects  using  complete data, the efficient estimation of causal parameters in the presence of missing data has received less attention. \citet{seaman2014inverse} investigated the use of multiple imputation for handling missing confounders, followed by employing inverse-probability weighting to estimate causal quantities under the missing at random (MAR) assumption. \citet{mayer2020doubly} further introduced doubly robust causal effect estimation with missing confounders. \citet{williamson2012doubly} addressed scenarios involving missing outcomes, missing exposures, and single missing confounders separately, proposing ``multiply robust" estimators of treatment effects. However, as  \citet{evans2020coherent} demonstrated, the method proposed by \citet{williamson2012doubly} fails to be multiply robust due to variation dependence between outcome and confounder imputation models. Additionally, \citet{evans2020coherent}  explored estimating the average treatment effect among the treated in situations where  one confounder is MAR. They developed a doubly robust approach grounded in variation independent components, although 
	this approach is complicated due to the necessity of modeling conditional odds ratios. In contrast, we present a novel semiparametric efficient estimation framework for ATE under a slightly strengthened assumption compared to MAR, which motivates the development of
	a class of triply robust estimators based on variation independent nuisance models.
	
	The rest of this paper is organized as follows.  Section \ref{sec: notation} introduces the notation, assumptions,  and  three different identification formulas.    In Section \ref{sec: estimation}, we derive the EIF for ATE and propose the triply robust estimators. Section \ref{sec:extension} discusses potential extensions of our proposed method and explores its connections with missing data. We evaluate the finite-sample performance of the proposed method  through simulations and a real data example in Section \ref{sec:numerical}. Finally, Section \ref{sec:concludes} concludes  with a brief discussion. 
	
	\section{Notation, assumptions, and identification}
	\label{sec: notation}
	
	Suppose we are interested in estimating the ATE of a binary treatment $Z$ on an outcome $Y$ within a primary study of interest.
	Following the framework outlined in \citet{rubin1974estimating}, we adopt potential outcomes to define causal effects and make the stable unit treatment value assumption, wherein each unit has a single version of potential outcomes and no interference occurs between units. Let $Y_z$ denote  the potential outcome that would be observed if the treatment $Z$ were $z$ for $z=0,1$. Under the causal consistency assumption, the observed outcome can be expressed as $Y=ZY_1+(1-Z)Y_0$. The ATE is denoted as $\tau=\E(Y_1-Y_0)$. In addition to  the treatment   $Z$ and   outcome   $Y$, a set of observed covariates denoted by $X$ is also observed in the primary dataset.  
Simultaneously, an available supplementary dataset provides additional covariates  represented by $V$, although it lacks the other variables present in the primary dataset. We use $R$ to denote a binary indicator variable which takes the value 1 if a unit appears in both datasets and 0 if the unit is exclusive to the primary dataset. The collective inclusion of units observed in both datasets, typically a subset of the primary data, constitutes what we refer to as the linked cohort data.   We hence have complete information on all relevant variables including $(V, X, Z, Y)$ within this linked data.
	Under a superpopulation model, we assume that the observations in the combined dataset
	$\{O_i = (R_i, R_iV_i, X_i, Z_i, Y_i),i=1,\ldots,n\} $ are  identically and independently distributed. To account for potential heterogeneity between the two data sources, we introduce the following assumption:


	
	\begin{assumption}
		\label{ass: MAR-V}
		$R \indep (Y,Z,V)\mid X$.    
	\end{assumption}
	
	Assumption \ref{ass: MAR-V} means that the probability of being in the linked cohort depends solely on fully observed covariates.
	This assumption was implicitly considered in addressing potential selection bias within linked cohort data by \citet{sun2022use}. A weaker version of Assumption \ref{ass: MAR-V}, known as the MAR assumption, is denoted as $R \indep V \mid (X,Y,Z)$. This MAR assumption implies that the selection mechanism is predicated on all the observed data  \citep{robins1994estimation}. While the MAR assumption is inherently less stringent than our Assumption \ref{ass: MAR-V}, we can verify the latter by examining conditional independence between $R$ and $(Z,Y)$ given $X$   within scenarios that comply with MAR conditions.
	
	
	A fundamental problem {for} causal inference is that we can never observe two potential outcomes for a unit simultaneously. Following \citet{rosenbaum1983central},  we make the following ignorability assumption  to identify ATE. 
	\begin{assumption}\label{ass:ignorability}
		$Z\indep(Y_0,Y_1)\mid(X,V)$.
	\end{assumption}     
	Assumption \ref{ass:ignorability} posits that the treatment assignment depends on both    $X$ and  $V$. When  {combined}  with Assumption \ref{ass: MAR-V}, this implies that the treatment assignment is ignorable in the linked cohort with $R=1$ given the observed variables   $X$ and  $V$. However, in the subgroup with $R=0$, the treatment assignment is only ``latent" ignorable given  $X$ and the latent variable $V$.  In the subsequent discussion, we define the following terms: the conditional probability of treatment assignment, often referred to as the propensity score, denoted as $\pi(X,V) =\pr(Z=1\mid X,V)$; the conditional probability of the selection indicator, known as the selection probability, denoted as ${\rho}(X) =\pr(R=1\mid X)$; and the conditional outcome mean, denoted as $\mu(Z,X,V) =\E(Y\mid Z ,X,V)$. Additionally, we introduce   $f(V\mid X)$ to denote the conditional density of the partially  observed covariate  $V$   given the fully  observed covariate $X$. This is often referred to as the imputation model. Under Assumptions~\ref{ass: MAR-V} and \ref{ass:ignorability}, we derive that $\pi(X,V)=\pr(Z=1\mid X,V,R=1)$, $f(V\mid X)=f(V\mid X,R=1)$, and $\mu(Z,X,V) =\E(Y\mid Z,X,V,R=1)$. These estimands can be consistently estimated using the linked data. Next, we impose the positivity assumption on treatment assignment, which states that for all levels of  $X$ and $V$, there are both treated and control units present.
	\begin{assumption} \label{ass:positivity-pi}
		$0<\pi(X,V)<1$ for all $X$ and $V$. 
	\end{assumption} 
	We also make the assumption that the selection probability $\rho(X)$ for being observed in the linked cohort is bounded away from $0$. This is formalized by the following assumption. 
	\begin{assumption} \label{ass:positivity-e}
		$ \rho(X)>0$ for all $X$.  
	\end{assumption}When $\rho(X)=1$, it corresponds to the scenario of complete-case analysis without selection bias, where each unit in the primary dataset can be supplemented with the additional confounding factor $V$.  In cases where $\rho(X) < 1$, Assumptions \ref{ass: MAR-V}-\ref{ass:positivity-e} still suffice for identifying the desired causal quantity.  This observation  is summarized in the following theorem.
	
	\begin{theorem}\label{THEOREM:IDENTIFIABLE}
		Under Assumptions~\ref{ass: MAR-V}-\ref{ass:positivity-e}, ATE is identifiable through {the }following formulas.
		\begin{itemize}
			\item[(a)] Based on  selection probability and propensity score,  	\begin{align*}
				\tau =\E\bigg\{ \frac{R}{\rho(X)}\frac{ Z}{ \pi(X,V )} Y \bigg\}-\E\bigg\{ \frac{R}{\rho(X)}\frac{1-Z}{1-\pi(X,V )}Y \bigg\} .
			\end{align*}
			
			\item[(b)] Based on selection probability and outcome mean,
			\begin{align*}   \tau =\E\left[\frac{R}{
					\rho(X)} \{\mu(1,X,V  )-\mu(0,X,V )\}\right]. 
			\end{align*}
			\item[(c)] Based on   imputation model and outcome mean,  	\begin{align*}
				\tau=\E\{\delta(1,X)- \delta(0,X)  \},
			\end{align*} 
			where $\delta(z,X)=\E\{\mu(z,X,V)\mid X\}$.
		\end{itemize}
	\end{theorem}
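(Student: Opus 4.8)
The plan is to reduce each of the three displayed formulas to one of the two classical identification results for the ATE --- the outcome-regression (g-formula) representation and the inverse-propensity representation --- that would be available if the confounder $V$ were fully observed, and then to absorb the selection mechanism using Assumption~\ref{ass: MAR-V}. Alongside each algebraic reduction I would check that the resulting expression depends on the observed-data law only through quantities estimable from the linked cohort, invoking the identities $\pi(X,V)=\pr(Z=1\mid X,V,R=1)$, $f(V\mid X)=f(V\mid X,R=1)$, and $\mu(z,X,V)=\E(Y\mid Z=z,X,V,R=1)$ already recorded above; this is what turns the formulas into genuine identification results rather than tautologies.

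For part~(a) I would begin with the first term on the right-hand side and condition on $(X,V,Z,Y)$. The quantity $\{R/\rho(X)\}\{ZY/\pi(X,V)\}$ is a function of $R$ together with $(X,V,Z,Y)$, so the tower rule lets me replace $R$ by $\E(R\mid X,V,Z,Y)$, which Assumption~\ref{ass: MAR-V} collapses to $\rho(X)$ --- well-defined and nonzero by Assumption~\ref{ass:positivity-e} --- thereby cancelling the denominator. What remains is the Horvitz--Thompson functional $\E\{ZY/\pi(X,V)\}$, which equals $\E(Y_1)$ by the standard argument: condition on $(X,V)$, use $Y=ZY_1+(1-Z)Y_0$ and Assumption~\ref{ass:ignorability} to obtain $\E(ZY\mid X,V)=\pi(X,V)\,\E(Y_1\mid X,V)$, and divide (legitimate by Assumption~\ref{ass:positivity-pi}). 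The second term is treated symmetrically and yields $\E(Y_0)$, so the difference is $\tau$.

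For part~(b) the same device is applied one conditioning level up: conditioning on $(X,V)$ and using Assumption~\ref{ass: MAR-V} in the form $\E(R\mid X,V)=\rho(X)$ turns $\E[\{R/\rho(X)\}\{\mu(1,X,V)-\mu(0,X,V)\}]$ into $\E\{\mu(1,X,V)-\mu(0,X,V)\}$; then Assumption~\ref{ass:ignorability} gives $\mu(z,X,V)=\E(Y_z\mid X,V)$, so averaging over $(X,V)$ returns $\E(Y_z)$. For part~(c) I would observe that $\E\{\delta(z,X)\}=\E[\E\{\mu(z,X,V)\mid X\}]=\E\{\mu(z,X,V)\}$ by the tower rule, which is precisely the quantity shown to equal $\E(Y_z)$ in part~(b); hence $\tau=\E\{\delta(1,X)-\delta(0,X)\}$. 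The extra point here is the identifiability of $\delta(z,X)=\int \mu(z,X,v)\,f(v\mid X)\,dv$, which holds because both $\mu$ and $f(\cdot\mid X)$ agree with their $R=1$ conditional counterparts under Assumption~\ref{ass: MAR-V}.

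I do not expect a real obstacle: the argument is a short chain of tower-rule manipulations in which Assumption~\ref{ass: MAR-V} neutralizes selection, Assumption~\ref{ass:ignorability} neutralizes confounding, and the two positivity conditions keep the inverse weights and conditional expectations well-defined. The only place that needs mild care is the bookkeeping --- checking at each step that the integrand inside the outer expectation is measurable with respect to the conditioning $\sigma$-field used, and confirming that each final expression is written purely in terms of observed-data functionals so that ``identifiable'' is justified in the strict sense.
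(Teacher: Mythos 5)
Your proposal is correct and in substance matches the paper's own proof: both rest on the same tower-rule manipulations, with Assumption~\ref{ass: MAR-V} removing the selection weighting, Assumption~\ref{ass:ignorability} plus consistency handling confounding, and Assumptions~\ref{ass:positivity-pi}--\ref{ass:positivity-e} justifying the inverse weights, and with the same observation that $\pi$, $\mu$, and $f(V\mid X)$ coincide with their $R=1$ counterparts so the formulas are genuinely identified. The only difference is direction---you verify the displayed functionals equal $\tau$ while the paper derives them starting from $\E(Y_1-Y_0)$---which is an equivalent presentation of the same argument.
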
  
	Theorem \ref{THEOREM:IDENTIFIABLE} establishes the identification of ATE through three distinct components of observed likelihood. Below we offer further insights into these three strategies for identification.   In Theorem \ref{THEOREM:IDENTIFIABLE}(a), the ATE is expressed as the difference between weighted averages of the outcome under different treatment assignments \citep{sun2022use}. This formulation involves two sets of weights: $R/\rho(X)$, which addresses the selection bias; $Z/\pi(X,V)$ and $(1-Z)/\{1-\pi(X,V)\}$, which address the confounding bias.
	The identification expression in Theorem \ref{THEOREM:IDENTIFIABLE}(a) can be interpreted as an analogy to inverse probability weighting for identifying ATE while using another inverse selection probability weighting to account for selection bias.

	Theorem \ref{THEOREM:IDENTIFIABLE}(b)  provides the identification of ATE through  the selection probability and    outcome mean. In comparison to Theorem \ref{THEOREM:IDENTIFIABLE}(a), two terms involving the propensity score, $ZY/\pi(X,V)$ and $(1-Z)Y/\{1-\pi(X,V)\}$, are replaced by the outcome means $\mu(1,X,V)$ and $\mu(0,X,V)$, respectively. Much like Theorem \ref{THEOREM:IDENTIFIABLE}(a),
	this expression can be   interpreted  as an outcome regression analogy for identifying ATE while using inverse probability weighting to adjust for selection bias.

	Theorem \ref{THEOREM:IDENTIFIABLE}(c) presents an expression for ATE in terms of the outcome mean and the imputation model. Under Assumptions~\ref{ass: MAR-V}-\ref{ass:positivity-e}, Theorem \ref{THEOREM:IDENTIFIABLE}(c) identifies ATE by marginalizing the aforementioned conditional expectation $\delta (1,X) - \delta(0,X)$ over the distribution of $X$.
	
	\section{Estimation}
	\label{sec: estimation}
	\subsection{Semiparametric estimation}\label{subsec:semi}
	In this section, we  describe several estimation methods for ATE. The three identification formulas presented in Theorem \ref{THEOREM:IDENTIFIABLE} motivate three estimators, whose consistency and asymptotic normality rely on the correct specification of different nuisance functions. For simplicity, we use $\mathbb{P}_n$ to represent the empirical mean operator, defined as $\mathbb{P}_n (U)=n^{-1}\sum_{i=1}^n U_i$ for a generic random variable $U$. 
	We use working models $\rho(X;\alpha)$, $\pi(X,V;\beta)$, $\mu(Z,X,V;\gamma)$, and $f(V\mid X;\theta)$ for the selection probability $\rho(X)$, propensity score $\pi(X,V)$, outcome mean $\mu(Z,X,V)$, and imputation model $f(V\mid X)$, respectively.  We can obtain a consistent estimator $\widehat{\alpha}$ through maximum likelihood estimation (MLE). Additionally, consistent estimators $\widehat{\beta}$, $\widehat{\gamma}$, and $\widehat{\theta}$ can be constructed through MLE or moment methods based on the linked data.  Let $\alpha^\star$, $\beta^\star$, $\gamma^\star$, and $\theta^\star$ denote the probability limits of $\widehat{\alpha}$, $\widehat{\beta}$, $\widehat{\gamma}$, and $\widehat{\theta}$, respectively.  To motivate and clarify our proposed triply robust estimator, we introduce three classes of semiparametric models. These models impose parametric restrictions on different components of the observed data likelihood, while allowing the remaining models to be unrestricted.

	
	\begin{enumerate}[leftmargin=15pt]
		\item[$\mathcal{M}_1$:] The selection probability   ${\rho}(X;\alpha)$ and {the }propensity score  $\pi(X,V;\beta)$ are correctly specified, such that $\rho(X;\alpha^\star)=\rho(X)$ and $\pi(X,V;\beta^\star)=\pi(X,V)$.
		\item[$\mathcal{M}_2$:] The selection probability   ${\rho}(X;\alpha)$ and {the} outcome mean  $\mu(Z,X,V;\gamma)$   are correctly specified, such that $\rho(X;\alpha^\star)=\rho(X)$ and $\mu(Z,X,V;\gamma^\star)=\mu(Z,X,V )$.
		\item[$\mathcal{M}_3$:] The outcome mean $ \mu(Z,X,V;\gamma)  $ and {the} imputation model $f(V\mid X;\theta)$ are correctly specified,  such that  $\mu(Z,X,V;\gamma^\star)=\mu(Z,X,V )$ and $f(V\mid X;\theta^\star)=f(V\mid X)$.
	\end{enumerate}
	
	The identification formula  in Theorem \ref{THEOREM:IDENTIFIABLE}(a) motivates the following  estimator based on     selection probability and propensity score: 
	\begin{equation*}
			\begin{aligned}
				\widehat\tau_{1}=\Pn\bigg\{\frac{R}{{\rho}(X;\widehat\alpha)}\frac{ZY}{\pi(X,V;\widehat\beta)}  \bigg\}-\Pn\bigg\{ \frac{R}{{\rho}(X;\widehat\alpha)}\frac{(1-Z)Y}{1-\pi(X,V;\widehat\beta)} \bigg\}.
			\end{aligned} 
	\end{equation*} 
	This estimator employs two inverse probability functions, which is similar to the Horvitz-Thompson type estimator \citep{horvitz1952generalization}. However, weighting-based estimators can be unstable in practice if the estimated propensity score is  close to zero or one. To address this issue, the Hájek-type weighting estimator is introduced, aiming to reduce variance compared to the Horvitz-Thompson estimator in practice \citep{hajek1971comment}.  The Hájek-type  weighting estimator    is given by	
	\begin{equation*} 
			\begin{aligned}
				&\widehat\tau_{1}^\prime=\Pn\bigg\{\frac{R}{{\rho}(X;\widehat\alpha)}\frac{ZY}{\pi(X,V;\widehat\beta)}  \bigg\}\bigg/\Pn\bigg\{\frac{R}{{\rho}(X;\widehat\alpha)}\frac{Z }{\pi(X,V;\widehat\beta)}  \bigg\}\\&~~~~~~~- \Pn\bigg\{ \frac{R}{{\rho}(X;\widehat\alpha)}\frac{(1-Z)Y}{1-\pi(X,V;\widehat\beta)} \bigg\}\bigg/\Pn\bigg\{\frac{R}{{\rho}(X;\widehat\alpha)}\frac{1-Z }{1-\pi(X,V;\widehat\beta)}  \bigg\}.
			\end{aligned}   
	\end{equation*}  
	Under standard regularity conditions, both weighting estimators $\widehat{\tau}_1$ and  $\widehat{\tau}_1^\prime$  are consistent and asymptotically normal within the submodel $\mathcal{M}_1$.  
	
	The identification formula  in Theorem \ref{THEOREM:IDENTIFIABLE}(b) motivates the following   estimator  based on  selection probability and outcome mean model:
	\begin{align*}   \widehat\tau_2 =\Pn\left[\frac{R\{\mu(1,X,V ;\widehat{\gamma} )-\mu(0,X,V;\widehat{\gamma} )\}}{
			{\rho}(X;\widehat\alpha )} \right]. 
	\end{align*}
	Similar to the estimator based on the  selection probability and propensity score,
	we can construct the stabilized   version of $\widehat\tau_2$ as follows:  
	\begin{align*}   \widehat\tau_2^\prime =\Pn\left[\frac{R\{\mu(1,X,V ;\widehat{\gamma} )-\mu(0,X,V;\widehat{\gamma} )\}}{
			{\rho}(X;\widehat\alpha )} \right]\bigg/\Pn\left\{\frac{R}{
			{\rho}(X;\widehat\alpha )}  \right\}. 
	\end{align*} 
	Under regularity conditions, both   estimators $\widehat{\tau}_2$ and  $\widehat{\tau}_2^\prime$ are  consistent and asymptotically normal  within the  submodel $\mathcal{M}_2$.   
	
	The identification formula  in Theorem \ref{THEOREM:IDENTIFIABLE}(c) motivates the following   estimator  based on the outcome mean  and the imputation model:	
	$$\widehat\tau_3 = \Pn\big\{\delta(1,X;\widehat{\gamma},\widehat{\theta}) -\delta(0,X;\widehat{\gamma},\widehat{\theta}) \big\},$$ 
	where
	$\delta(z,X;\widehat{\gamma},\widehat{\theta})=D^{-1}\textstyle\sum_{d=1}^D\mu(z,X,\widetilde V_d ;\widehat{\gamma})$
	and $\{\widetilde V_d :d=1,\ldots,D\}$ are $D$  observations independently drawn from the imputation  model $f(V\mid X;\widehat{\theta})$. 
	Previous studies have explored various imputation techniques to address missing covariates  \citep{Mitra2011EstimatingPS,seaman2014inverse}. 
	The main idea for $\widehat\tau_3$ involves  imputing the missing covariate $V$ for the primary dataset and integrating it into the outcome mean model to estimate ATE.  
	Within the submodel $\mathcal{M}_3$, the  estimator $\widehat\tau_3$ is consistent and asymptotically normal under regularity conditions. 
	\subsection{Triply robust estimation}
	In the preceding section, we demonstrated that the estimation of ATE can be achieved through three different formulas, allowing analysts to construct a variety of estimators by combining these approaches. However, this flexibility also calls for the development of more principled estimation techniques. Moreover, in the presence of high-dimensional covariates, it is challenging to ascertain the correct specification of   these nuisance  models. Consequently, it becomes crucial to develop robust estimators that can handle potential model misspecifications. In this section, our objective is to construct a consistent and asymptotically normal estimator under the union model $\mathcal M_{\mathrm{union}}=\cup_{j=1}^3\mathcal{M}_j$. To achieve this, we first derive the efficient influence function  under the nonparametric model of the observed data distribution, which serves as the foundation for our triply robust estimator.  
	\begin{theorem}
		\label{THM: EIF}
		Under Assumptions~\ref{ass: MAR-V}-\ref{ass:positivity-e},
		the efficient influence function for ATE is: $$\psi_{\mathrm{eff}}^\tau(O)=\varphi_1(O)-\varphi_0(O)-\tau,$$ 
		where  $ \varphi_{ z}(O)$ is defined as follows:  
		\begin{align}
			\label{eq: phiz-exp}
			\dfrac R{\rho(X)}\bigg[
			\dfrac {\mathbb{I}(Z=z)\left\{Y-\mu(z,X,V)\right\}}{\{\pi(X,V)\}^z\{1-\pi(X,V)\}^{1-z}}
			+\mu(z,X,V)-\delta(z,X) \bigg]   +  \delta(z,X).   
		\end{align}   
	\end{theorem}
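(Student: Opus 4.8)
The plan is to exploit the fact, emphasized in the statement, that the model for the observed data is nonparametric: then a gradient of the ATE functional is unique and coincides with the efficient influence function, so it suffices to write down $\psi^\tau_{\mathrm{eff}}$ and verify the defining variational identity. First I would record the likelihood factorization into variation‑independent pieces: the marginal law of $X$; the selection law $\rho(X)^R\{1-\rho(X)\}^{1-R}$; the ``rich‑data'' law on the linked cohort, $f(V\mid X)\,\pi(X,V)^Z\{1-\pi(X,V)\}^{1-Z}\,g(Y\mid Z,X,V)$, which by Assumptions~\ref{ass: MAR-V}--\ref{ass:ignorability} also equals the corresponding law given $R=1$ and whose three factors are themselves unrestricted; and, for $R=0$ units, the residual law of $(Z,Y)$ given $X$, which does not enter $\tau$. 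The score of a regular parametric submodel then decomposes as $S(O)=s_X(X)+a(R,X)+R\{s_f+s_\pi+s_g\}+(1-R)s_0(Z,Y,X)$ with the usual conditional‑mean‑zero constraints, and this class of scores spans the tangent space.

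Next I would differentiate, using the identification formula $\tau=\E\{\delta(1,X)-\delta(0,X)\}$ from Theorem~\ref{THEOREM:IDENTIFIABLE}(c), so that $\tau$ is a smooth function of the law of $X$ and of the rich‑data law only. The chain rule through $\mu$ (via $g$) and through $f(V\mid X)$ gives $\partial_t\tau|_{t=0}=\E\{C(X)s_X\}+\E\{B(X,V)s_f\}+\E\{A(O)s_g\}$, where $C(X)=\delta(1,X)-\delta(0,X)-\tau$, $B(X,V)=\mu(1,X,V)-\mu(0,X,V)-\delta(1,X)+\delta(0,X)$, and $A(O)=\mathbb{I}(Z=1)\{Y-\mu(1,X,V)\}/\pi(X,V)-\mathbb{I}(Z=0)\{Y-\mu(0,X,V)\}/\{1-\pi(X,V)\}$; the propensity reweighting of the $s_g$‑term and the recentering of the $s_f$‑term by $\delta$ are routine. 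It then remains to check $\E\{\psi^\tau_{\mathrm{eff}}(O)S(O)\}$ equals the same expression for every $S$, which rests on: $R(1-R)=0$, so the $R/\rho(X)$‑weighted part of $\psi^\tau_{\mathrm{eff}}$ only ever meets the $R\{s_f+s_\pi+s_g\}$ part of $S$; the identity $\E(R\mid X,V,Z,Y)=\rho(X)$ from Assumption~\ref{ass: MAR-V}, whence $\E\{R/\rho(X)\mid X,V,Z,Y\}=1$; and the conditional‑mean‑zero relations $\E(A\mid X,V)=0$, $\E(A\mid X,V,Z)=0$, $\E(B\mid X)=0$, which hold precisely because $\mu$ is a conditional mean and $\delta$ is its $X$‑conditional mean. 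A term‑by‑term comparison then shows the $a(R,X)$‑, $s_\pi$‑, and $(1-R)s_0$‑components of $S$ contribute nothing — reflecting that $\tau$ depends on neither $\rho$, nor $\pi$, nor the $R=0$ residual law — while the $s_X$‑, $s_f$‑, and $s_g$‑components reproduce $\E\{Cs_X\}$, $\E\{Bs_f\}$, $\E\{As_g\}$; a direct computation also gives $\E\{\varphi_z(O)\}=\E\{\delta(z,X)\}$, so $\E\{\psi^\tau_{\mathrm{eff}}(O)\}=0$. Nonparametricity then closes the argument. (Equivalently, one can display $\psi^\tau_{\mathrm{eff}}$ itself in the tangent space: $C(X)$ is a mean‑zero function of $X$, and $\{A(O)+B(X,V)\}/\rho(X)$ is conditionally mean‑zero given $(X,R=1)$, so $(R/\rho(X))\{A+B\}$ lies in the rich‑data part.)

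The step I expect to be the main obstacle is the coarsening of $V$ — that $V$ is seen only on $\{R=1\}$ — namely, seeing why the weight on the linked‑cohort augmented‑IPW pseudo‑outcome must be exactly $R/\rho(X)$ and the offsetting augmentation exactly $\{1-R/\rho(X)\}\,\delta(z,X)$ and not some other function of the always‑observed data. A cleaner way to organize this is the coarsening route: take the full‑data efficient influence function for the ATE with confounders $(X,V)$, i.e. $\varphi_1^{\mathrm{F}}-\varphi_0^{\mathrm{F}}-\tau$ with $\varphi_z^{\mathrm{F}}=\mathbb{I}(Z=z)\{Y-\mu(z,X,V)\}/[\pi(X,V)^z\{1-\pi(X,V)\}^{1-z}]+\mu(z,X,V)$, and invoke the structure of semiparametric models with data coarsened at a rate depending only on the fully observed $X$ (Assumption~\ref{ass: MAR-V}) together with the positivity Assumptions~\ref{ass:positivity-pi}--\ref{ass:positivity-e}; this yields the observed‑data influence function $(R/\rho(X))\varphi_z^{\mathrm{F}}+\{1-R/\rho(X)\}\E(\varphi_z^{\mathrm{F}}\mid X)$, and since $\E(\varphi_z^{\mathrm{F}}\mid X)=\delta(z,X)$ this collapses to $\varphi_z(O)$, giving the stated formula. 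Throughout, $\pi$, $\mu$, and $f$ in $\psi^\tau_{\mathrm{eff}}$ are the quantities identified from the linked cohort, which Assumptions~\ref{ass: MAR-V}--\ref{ass:ignorability} equate with their population counterparts.
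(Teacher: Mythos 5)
Your proposal is correct and takes essentially the same route as the paper: a pathwise-derivative computation along parametric submodels showing that $\varphi_1(O)-\varphi_0(O)-\tau$ is a gradient of $\tau$, resting on exactly the identities the paper's proof uses ($R(1-R)=0$, $\E\{R/\rho(X)\mid X,V,Z,Y\}=1$ from Assumption~\ref{ass: MAR-V}, and the conditional-mean-zero properties of the inverse-probability residual and of $\mu(z,X,V)-\delta(z,X)$). Starting from Theorem~\ref{THEOREM:IDENTIFIABLE}(c) rather than (b), and verifying the inner product term by term against your score decomposition rather than massaging the derivative into the inner product as the paper does, are immaterial organizational differences; like the paper, you settle efficiency by the nonparametric-model/tangent-space appeal (with the coarsening-route remark only as an aside), so the two arguments match in both substance and level of rigor.
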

	According to Theorem \ref{THM: EIF}, the semiparametric efficiency bound for  ATE under Assumptions~\ref{ass: MAR-V}-\ref{ass:positivity-e} is given  by $\E\{\psi_{\mathrm{eff}}^\tau(O)\}^2$. 
	Building upon this insight, we can enhance estimation efficiency compared to the estimators outlined in Section~\ref{subsec:semi}.  We now proceed to construct a locally efficient estimator and demonstrate its consistency and asymptotic normality within  $\mathcal M_{\mathrm{union}}$. Theorem \ref{THM: EIF} motivates  the following estimator:   
	\begin{equation}
		\label{eq: mr-estimator}
		\begin{aligned}
			\widehat\tau_{\MR}&= \Pn\{\widehat\varphi_1(O)- \widehat\varphi_0(O)\},
		\end{aligned}
	\end{equation} 
	where   $   \widehat  \varphi_{ z}(O)$ is defined as follows:
	\begin{equation}
		\label{eq:est-expre} 
			\begin{aligned}
				\frac R{\rho(X;\widehat{\alpha})}\Bigg[
				\dfrac {\mathbb{I}(Z=z)\left\{Y-\mu(z,X,V;\widehat\gamma)\right\}}{\{\pi(X,V;\widehat\beta)\}^z\{1-\pi(X,V;\widehat\beta)\}^{1-z}}
				+\mu(z,X,V;\widehat\gamma)-\delta(z,X;\widehat{\gamma},\widehat{\theta})
				\Bigg]   +  \delta(z,X;\widehat{\gamma},\widehat{\theta}).   
			\end{aligned}  
	\end{equation}   
	Although   $\widehat{\tau}_{\MR}$ involves models for   selection probability, propensity score, outcome mean and imputation model, its consistency and  asymptotic normality  do  not require the correct specification of all nuisance models. The following theorem   summarizes the triply robust property of $\widehat{\tau}_{\MR}$.
	\begin{theorem}
		\label{THM: MUL-ROBUSTNESS}
		Under standard regularity conditions, $\widehat{\tau}_{\MR}$ is  consistent and asymptotically normal {under} the union model $\mathcal{M}_{\mathrm{union}}$. Moreover, if all {the }models in $\mathcal{M}_1$, $\mathcal{M}_2$, and $\mathcal{M}_3$ are correctly specified, $\widehat{\tau}_{\MR}$ attains the semiparametric efficiency bound.
	\end{theorem}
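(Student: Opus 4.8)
The plan is to establish the triple robustness by a direct bias calculation, and then local efficiency as a corollary of the fact that the estimating function matches the EIF of Theorem~\ref{THM: EIF}. First I would write $\widehat\tau_{\MR} = \Pn\{\widehat\varphi_1(O) - \widehat\varphi_0(O)\}$ and, under standard regularity conditions (Glivenko--Cantelli / Donsker-type conditions on the working models, or sample splitting à la \citet{chernozhukov2018double}), reduce the analysis to studying the population functional $z \mapsto \E\{\varphi_z(O; \alpha^\star, \beta^\star, \gamma^\star, \theta^\star)\}$ evaluated at the probability limits of the nuisance estimators. The key algebraic lemma to prove is that $\E\{\varphi_z(O; \alpha^\star,\beta^\star,\gamma^\star,\theta^\star)\} = \E(Y_z)$ whenever the pair of nuisance functions defining any one of $\mathcal M_1$, $\mathcal M_2$, $\mathcal M_3$ is correctly specified, regardless of the remaining working models. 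Once this is in hand, consistency follows, and asymptotic normality follows from a standard expansion of $\Pn\{\widehat\varphi_1 - \widehat\varphi_0\}$ around the truth, using that the first-order correction terms from estimating $(\alpha,\beta,\gamma,\theta)$ vanish on $\mathcal M_{\mathrm{union}}$ because the estimating function is (a component of) the EIF and hence Neyman-orthogonal in the relevant directions.

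The bias calculation I would organize by conditioning and by grouping the terms in \eqref{eq:est-expre}. Abbreviate $\rho^\star = \rho(X;\alpha^\star)$, $\pi^\star=\pi(X,V;\beta^\star)$, $\mu^\star=\mu(z,X,V;\gamma^\star)$, $\delta^\star=\delta(z,X;\gamma^\star,\theta^\star)$, and write the true quantities without stars. Add and subtract to decompose the integrand of $\varphi_z$ into (i) a ``propensity residual'' piece $\frac{R}{\rho^\star}\frac{\mathbb I(Z=z)}{(\pi^\star)^z(1-\pi^\star)^{1-z}}\{Y-\mu^\star\}$, (ii) an ``imputation/outcome'' piece $\frac{R}{\rho^\star}\{\mu^\star - \delta^\star\}$, and (iii) the marginal piece $\delta^\star$. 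For model $\mathcal M_1$ ($\rho^\star=\rho$, $\pi^\star=\pi$): using Assumption~\ref{ass: MAR-V} to get $\E(R\mid X, Y, Z, V)=\rho(X)$, the factor $R/\rho^\star$ averages to $1$ conditionally, the $\mathbb I(Z=z)/\pi$ weight corrects confounding via Assumption~\ref{ass:ignorability}, and the $\mu^\star,\delta^\star$ terms cancel in expectation because they enter as $\{\mathbb I(Z=z)/\pi^\star - 1\}\mu^\star$ whose conditional expectation given $(X,V)$ is zero; this recovers the formula of Theorem~\ref{THEOREM:IDENTIFIABLE}(a). For $\mathcal M_2$ ($\rho^\star=\rho$, $\mu^\star=\mu$): piece (i) has conditional mean zero given $(R,X,V,Z)$ since $\E(Y\mid Z=z,X,V)=\mu(z,X,V)$ (and by Assumption~\ref{ass: MAR-V} this equals $\E(Y\mid Z=z,X,V,R=1)$), so only pieces (ii)+(iii) survive and, after taking $\E(R/\rho\mid X)=1$, collapse to $\E\{\mu(z,X,V)\}$, i.e. Theorem~\ref{THEOREM:IDENTIFIABLE}(b)--(c) combined. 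For $\mathcal M_3$ ($\mu^\star=\mu$, $\delta^\star=\delta$): piece (i) again vanishes in conditional expectation; pieces (ii)+(iii) combine to $\frac{R}{\rho^\star}\{\mu - \delta\} + \delta$, and since $\E\{\mu(z,X,V)\mid X\}=\delta(z,X)$ the bracket $\mu-\delta$ has conditional mean zero given $X$ (using $R\indep V\mid X$ from Assumption~\ref{ass: MAR-V} so that the imputation distribution is the correct one even inside the $R/\rho^\star$ weight), leaving $\E\{\delta(z,X)\}$, which is Theorem~\ref{THEOREM:IDENTIFIABLE}(c). In each case the wrong working models only appear inside a factor that integrates to zero or one, which is exactly the robustness mechanism.

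For the asymptotic normality and efficiency claim I would invoke a now-standard argument: write $\sqrt n(\widehat\tau_{\MR}-\tau) = \sqrt n\,\Pn\{\psi_{\mathrm{eff}}^\tau(O)\} + \sqrt n(\Pn-\P)\{\widehat\varphi_1-\widehat\varphi_0 - \varphi_1+\varphi_0\} + \sqrt n\,\P\{\widehat\varphi_1-\widehat\varphi_0\} - \sqrt n\,\tau$, where hats denote plug-in of $\widehat\alpha,\widehat\beta,\widehat\gamma,\widehat\theta$ and the last line is the population bias just analyzed. The empirical-process term is $o_P(1)$ under a Donsker condition on the working-model classes (or is handled by cross-fitting). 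The drift term requires a second-order (product-of-errors) bound: a Taylor expansion of $\P\{\widehat\varphi_z\}$ in the nuisance parameters, where the first-order terms vanish because $\varphi_z$ is constructed from the EIF (Neyman orthogonality), leaving remainders of the form (error in $\rho$)$\times$(error in $\pi$), (error in $\rho$)$\times$(error in $\mu$), and (error in $\gamma$)$\times$(error in $\theta$) --- each of which is zero on the corresponding submodel, hence on $\mathcal M_{\mathrm{union}}$ the drift is exactly $0$ (parametric case) or $o_P(n^{-1/2})$ under rate conditions (machine-learning case). Then $\sqrt n(\widehat\tau_{\MR}-\tau)\rightsquigarrow N(0,\E\{\psi_{\mathrm{eff}}^\tau(O)\}^2)$ when all models in $\mathcal M_1\cap\mathcal M_2\cap\mathcal M_3$ hold, giving local efficiency; off the intersection but inside the union, the influence function is still well-defined and yields a (generally non-optimal) asymptotic variance. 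The main obstacle I anticipate is the bookkeeping in the drift/second-order term: verifying that the cross-derivative structure really is a clean product of two nuisance errors for each of the three pairs, so that a single estimating function can be simultaneously orthogonal in all three ``directions'' of misspecification --- this is where one must use the specific additive form of \eqref{eq: phiz-exp} and the identities $\pi(X,V)=\pr(Z=1\mid X,V,R=1)$, $f(V\mid X)=f(V\mid X,R=1)$, $\mu(Z,X,V)=\E(Y\mid Z,X,V,R=1)$ that make all three working models estimable from the linked data alone.
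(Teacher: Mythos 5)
Your triple-robustness computation is correct and is essentially the paper's own proof: the paper likewise verifies, case by case, that the population mean of the estimating function evaluated at the probability limits $(\alpha^\star,\beta^\star,\gamma^\star,\theta^\star)$ equals $\tau$ under each of $\mathcal{M}_1$, $\mathcal{M}_2$, $\mathcal{M}_3$, using exactly the facts you invoke ($R\indep Z\mid (X,V)$, $\E(Y\mid Z,X,V,R)=\mu(Z,X,V)$, and $R\indep V\mid X$ so that $\E\{\mu(z,X,V)-\delta(z,X)\mid X,R\}=0$); your split into the propensity-residual piece, the $\mu-\delta$ piece, and the marginal $\delta$ piece is only a different bookkeeping of the same cancellations. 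The local-efficiency claim at the intersection submodel is also handled the same way as in the paper (all nuisances correct, so the estimator is asymptotically linear with influence function $\psi_{\mathrm{eff}}^\tau$).

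Two points in your asymptotic-normality sketch are genuinely off, however. First, the assertion that the first-order correction terms from estimating $(\alpha,\beta,\gamma,\theta)$ vanish on all of $\mathcal{M}_{\mathrm{union}}$ by Neyman orthogonality is false off the intersection: for instance, under $\mathcal{M}_1$ with $\mu^\star\neq\mu$, the derivative of $\E\{\varphi_1\}$ with respect to $\beta$ at the limits equals $-\E\left[\left\{\mu(1,X,V)-\mu^\star(1,X,V)\right\}\dot\pi(X,V)/\pi(X,V)\right]$, which is generally nonzero. Orthogonality holds only when all nuisances are correct; under the union model the paper instead appeals to standard M-estimation (stacked estimating equations), which still yields asymptotic normality but with an influence function that differs from $\psi_{\mathrm{eff}}^\tau$ away from the intersection --- consistent with your closing remark but not with the orthogonality claim. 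Second, your list of second-order remainders is wrong: the drift is controlled by the products (error in $\rho$)$\times$(error in $\mu$), (error in $\rho$)$\times$(error in $f$), and (error in $\pi$)$\times$(error in $\mu$) --- the paper's $B_1,B_2,B_3$ and condition (d) of Theorem \ref{THM: NON-MARCHINE-LEARNING} --- not (error in $\rho$)$\times$(error in $\pi$) and (error in $\gamma$)$\times$(error in $\theta$). The latter two do not vanish on the union: (error in $\gamma$)$\times$(error in $\theta$) survives under $\mathcal{M}_1$ (where both $\mu$ and $f$ may be wrong) and (error in $\rho$)$\times$(error in $\pi$) survives under $\mathcal{M}_3$, so the structure you state would contradict the robustness you correctly derived by hand. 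Both issues are repairable, but as written the normality argument under $\mathcal{M}_{\mathrm{union}}$ does not go through.
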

	The asymptotic variance formula of $\widehat{\tau}_\MR$   can  be derived using standard M-estimation theory  \citep{newey1994large}. Alternatively, we can also use the nonparametric bootstrap  for variance estimation in practice.  Now
	we offer intuitive explanations for the triple robustness property.  Specifically,  
	the probability limit of $\Pn\{\widehat\varphi_1(O)\}-\E(Y_1)$ can be expressed as $B_{1 }+B_{2}+B_{3} $, where 
	\begin{equation*}
			\begin{aligned}    
				B_{  1 }&=\mathbb{E}\left[\begin{matrix}
					\dfrac{ {\mu}\left(1,X,V;\gamma^\star\right)R}{f (V\mid X ) {{\rho}(X)}{{\rho}(X;\alpha^\star)}}\left\{ { {\rho}(X;\alpha^\star)}-{{\rho}(X)}
					\right\}\times \left\{f (V\mid X )- {f} (V\mid X;\theta ^\star)\right\}
				\end{matrix}\right],\\
				B_{2}& =\mathbb{E}\left[\begin{matrix}
					\dfrac{\pi (X,V) R}{ \pi (X,V;\beta^\star)   {{\rho}(X)}{ {\rho}(X;\alpha^\star)}}\left\{ { {\rho}(X;\alpha^\star)}-{{\rho}(X)}
					\right\}\times\left\{\mu(1,X,V)- {\mu}(1,X,V;\theta^\star)\right\}
				\end{matrix}\right],\\B_{3}&=  \mathbb{E}\left[\begin{matrix}
					\dfrac R{{ \pi (X,V;\beta^\star) }\rho(X)}\left\{\pi(X,V)-\pi(X,V;\beta^\star)\right\}\times\left\{\mu\left(1,X,V\right)- {\mu}(1,X,V;\theta^\star)\right\}
				\end{matrix}\right].
			\end{aligned}
	\end{equation*} 
	The  technical details  are provided in the supplementary material.   As a sanity check, it can be verified that the bias term $B_1$ equals zero when employing the correct selection probability or imputation model; the bias term $B_2$ equals zero when utilizing the correct selection probability or outcome mean model;  and the bias term $B_3$ equals zero when using the correct propensity score or outcome mean model. Therefore, the asymptotic bias converges to zero under the union model, indicating the triple robustness property of $\widehat{\tau}_{\MR}$.

	We  now examine the properties of $\widehat{\tau}_{\MR}$ when nonparametric estimation is used for nuisance functions. Importantly, we want to emphasize that $\widehat{\alpha},\widehat{\beta},\widehat{\gamma }$ and $\widehat{\theta}$ can also represent nonparametric nuisance function estimates, such as sieves, splines, random forests, or neural networks. By incorporating more flexible estimation approaches for the nuisance functions, we can mitigate the potential risk of parametric model misspecification in  practice.
	Given certain regularity conditions, the bias formulas mentioned earlier offer an alternative avenue to achieve consistency and asymptotic normality when utilizing flexible estimates for the nuisance functions. This concept is similar to the doubly debiased machine learning approach used for estimating ATE in complete-case analysis \citep{chernozhukov2018double}. We define  the $L_2$-norm of a 
	function $h$ with respect to the distribution $\nu$ of a generic variable $U$  by $\Vert h\Vert_2=\{\int h^2(u)\nu(du)\}^{1/2}$.
	
	\begin{theorem}
		\label{THM: NON-MARCHINE-LEARNING}	Suppose that Assumptions~\ref{ass: MAR-V}-\ref{ass:positivity-e} hold, and we further introduce the following conditions:
		\begin{itemize} \item[(a)]  {the functions} $\{\rho(X;\widehat{\alpha}) $, $\pi(X,V;\widehat\beta)$, $\mu(Z,X,V;\widehat \gamma)$, $f(V\mid X;\widehat \theta)\}$ converge in probability to $\{\rho(X),\pi(X,V), \mu(Z,X,V)$, $f(V\mid X)\}$,  for all  $X$ and $V$;
			\item[(b)] {the} functions  $ \{\rho(X;\widehat{\alpha})$, $\pi(X,V;\widehat\beta)$, $\mu(Z,X,V;\widehat \gamma)$, $f(V\mid X;\widehat \theta)\}$ and  $\{\rho(X)$, $\pi(X,V)$, $\mu(Z,X,V)$, $f(V\mid X)\}$ are in a Donsker class;  
			\item[(c)]  {for all  $X$ and $V$,  there exist   constants  $M>0$ and $0<\epsilon <1$ satisfy{ing} that:  
				\begin{gather*}
					-M<\left\{   \mu(Z,X,V;\widehat \gamma) ,       \mu(Z,X,V)    \right\}<M; \\
					\epsilon<\left\{\rho(X;\widehat{\alpha}), \pi(X,V;\widehat\beta), \rho(X),  \pi(X,V) \right\}<1-\epsilon .
				\end{gather*} 
				We also assume that $f(V\mid X;\widehat \theta)>\kappa$  and $f(V\mid X)   >\kappa$ for all  $X$ and $V$, where $\kappa$ is a positive constant.} 
			\item[(d)]The following three product terms  are   of order   $o_p(n^{-1/2})$,
			\begin{equation}    \label{eq: bias-terms}
				\begin{gathered}
					{\big\|\rho(X)-\rho(X;\widehat\alpha )\big\|}_2{\big\|m(X,V)-m(X,V ;\widehat\gamma )\big\|}_2 ,\\ {\big\|\rho(X)-\rho(X;\widehat\alpha )\big\|}_2{\big\|f (V\mid X )-f (V\mid X ;\widehat\theta  )\big\|}_2 ,\\ {{\big\|\pi(X,V)-\pi(X,V ;\widehat\beta )\big\|}_2} {\big\|m(X,V)-m(X,V ;\widehat\gamma )\big\|}_2,
				\end{gathered}
			\end{equation}
		\end{itemize}where $m(X,V)=\mu(1,X,V)-\mu(0,X,V)$ and $m(X,V;\widehat\gamma )=\mu(1,X,V;\widehat\gamma  )-\mu(0,X,V; \widehat\gamma )$.  
		Then $\widehat\tau_{\MR}$ in \eqref{eq: mr-estimator} is   asymptotically normal, has the influence function $ {\psi_{\mathrm{eff}}^\tau}(O)$, and achieves the semiparametric efficiency bound.
	\end{theorem}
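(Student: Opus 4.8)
# Proof Proposal for Theorem \ref{THM: NON-MARCHINE-LEARNING}

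The plan is to follow the standard debiased-machine-learning argument: decompose the estimation error into a leading empirical-process term, a stochastic-equicontinuity remainder, and a second-order bias remainder, and show the last two are $o_p(n^{-1/2})$. Write $\eta=(\rho,\pi,\mu,f)$ for the true nuisance functions, $\widehat\eta=\{\rho(\cdot;\widehat\alpha),\pi(\cdot;\widehat\beta),\mu(\cdot;\widehat\gamma),f(\cdot;\widehat\theta)\}$ for their estimates, and let $\varphi_z(O;\eta)$ denote the expression \eqref{eq: phiz-exp} viewed as a functional of $\eta$, so that $\varphi_z(O;\eta)=\varphi_z(O)$ at the truth and $\widehat\varphi_z(O)=\varphi_z(O;\widehat\eta)$. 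Because $\E\{\psi_{\mathrm{eff}}^\tau(O)\}=0$ gives $\E\{\varphi_1(O)-\varphi_0(O)\}=\tau$, with $\Delta_z(O)=\varphi_z(O;\widehat\eta)-\varphi_z(O;\eta)$ one writes
$$\widehat\tau_{\MR}-\tau=(\Pn-\E)\big\{\psi_{\mathrm{eff}}^\tau(O)\big\}+(\Pn-\E)\big\{\Delta_1(O)-\Delta_0(O)\big\}+\E\big\{\Delta_1(O)-\Delta_0(O)\big\}.$$
The first term is, by the central limit theorem, $n^{-1/2}$ times a sum of i.i.d.\ mean-zero variables with variance equal to the efficiency bound $\E\{\psi_{\mathrm{eff}}^\tau(O)\}^2$ of Theorem \ref{THM: EIF}, so it suffices to show the remaining two terms are negligible.

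For the empirical-process term I would invoke asymptotic equicontinuity. Conditions (a) and (c) give $\Vert\Delta_z\Vert_2\to_p0$, and condition (b), together with the uniform boundedness and the positivity bounds $\epsilon,\kappa$ in (c), ensures that $\varphi_z(\cdot;\widehat\eta)$ and $\varphi_z(\cdot;\eta)$ lie in a fixed Donsker class with probability tending to one: on the region carved out by (c) the map $\eta\mapsto\varphi_z(\cdot;\eta)$ is a Lipschitz combination of the nuisance functions, and Donsker classes are stable under such maps. The standard lemma on empirical processes indexed by a Donsker class then yields $(\Pn-\E)\{\Delta_1-\Delta_0\}=o_p(n^{-1/2})$. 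If one prefers to dispense with the Donsker assumption, the same conclusion follows via cross-fitting as in \citet{chernozhukov2018double}.

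The crux is the bias term. Here I would compute $\E\{\varphi_1(O;\widehat\eta)-\varphi_0(O;\widehat\eta)\}-\tau$ exactly, treating $\widehat\eta$ as a fixed generic element of the nuisance space. Using Assumptions~\ref{ass: MAR-V}--\ref{ass:positivity-e} and iterated expectations — the same manipulation underlying the decomposition $B_1+B_2+B_3$ displayed after Theorem \ref{THM: MUL-ROBUSTNESS}, now with $\eta^\star$ replaced by $\widehat\eta$ — this difference reduces to a sum of products, each summand carrying a factor $\rho(X)-\rho(X;\widehat\alpha)$, $\pi(X,V)-\pi(X,V;\widehat\beta)$, or $f(V\mid X)-f(V\mid X;\widehat\theta)$ multiplied by one of these again or by an outcome-model error, and grouping the $z=1$ and $z=0$ contributions collapses the latter into $m(X,V)-m(X,V;\widehat\gamma)$; the accompanying multipliers (built from $1/\rho$, $\pi/\pi(\cdot;\widehat\beta)$, $1/f$, and bounded outcome means) are uniformly bounded by condition (c). Applying the Cauchy--Schwarz inequality to each product bounds $|\E\{\Delta_1-\Delta_0\}|$ by a constant times the three $L_2$-norm products in \eqref{eq: bias-terms}, each $o_p(n^{-1/2})$ by condition (d). Combining the three pieces gives $\widehat\tau_{\MR}-\tau=(\Pn-\E)\{\psi_{\mathrm{eff}}^\tau(O)\}+o_p(n^{-1/2})$, hence asymptotic normality with influence function $\psi_{\mathrm{eff}}^\tau(O)$ and attainment of the semiparametric efficiency bound.

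I expect the bias computation to be the main obstacle: one must verify Neyman orthogonality, i.e.\ that the Gateaux derivative of $\eta\mapsto\E\{\varphi_1(O;\eta)-\varphi_0(O;\eta)\}$ vanishes at the truth in every direction, and must track carefully how perturbations of $\gamma$ and $\theta$ interact inside $\delta(z,X;\gamma,\theta)=\int\mu(z,X,v;\gamma)\,f(v\mid X;\theta)\,dv$, so that the remainder contains exactly the three products in \eqref{eq: bias-terms} and no others. A minor point to dispatch along the way concerns the Monte Carlo average defining $\delta(z,X;\widehat\gamma,\widehat\theta)$: taking $D\to\infty$ (or replacing the average by the exact integral) renders its extra contribution $o_p(n^{-1/2})$ and leaves the argument unchanged.
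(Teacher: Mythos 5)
Your proposal is correct and follows essentially the same route as the paper: the same three-term decomposition into an oracle empirical-process term, a Donsker-controlled equicontinuity remainder, and a second-order bias term that is computed exactly (yielding products of $\rho-\widehat\rho$ with the $\mu$- and $f$-errors and of $\pi-\widehat\pi$ with the $\mu$-error) and then bounded via Cauchy--Schwarz by the three norm products in condition (d). Your remarks on verifying orthogonality through the exact bias calculation and on handling the Monte Carlo approximation of $\delta(z,X;\widehat\gamma,\widehat\theta)$ are consistent with, and slightly more explicit than, what the paper does.
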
 
	The consistency requirement in condition (a) has been extensively examined in various widely used flexible models  \citep{farrell2021deep}. Condition (b) introduces limitations on the complexity of the spaces including the nuisance functions and their corresponding estimators \citep{kennedy2016semiparametric,van2000asymptotic}. To  alleviate  the need for these empirical process conditions, techniques such as sample splitting and cross-fitting can be employed \citep{chernozhukov2018double,newey2018cross}.  The bounded constraints outlined in condition (c) suffice to constrain the error $    \Pn\{\widehat\varphi_1(O)- \widehat\varphi_0(O)\}-\tau$ by  the summation of the terms in a form analogous to  \eqref{eq: bias-terms}. Similar to condition (a), 
	condition (d) is a modest requirement and can be satisfied in many scenarios. For instance, as long as the estimates of the nuisance functions converge
	faster than $n^{-1/4}$, condition (d) is satisfied, and this can be accomplished by flexible models under sparsity or smoothness constraints.
	\section{Extensions and connections}
	\label{sec:extension}
	\subsection{Design issue: optimal two-phase sampling ratio} 
	In this subsection, we focus on tackling the design issue of   determining the optimal sample allocation that minimizes the {asymptotic} variance of the proposed estimator. This design  problem shares similarities with the classical two-phase sampling approach \citep{cochran2007sampling}. We introduce  a more stringent assumption than Assumption \ref{ass: MAR-V}, specifically assuming that the linked data is a simple random sample from the primary study population. 
	\begin{assumption}
		\label{eq: indep-R}
		$R\indep (Z,X,V,Y)$.
	\end{assumption}
	{Assumption \ref{eq: indep-R} was also considered by \citet{yang2019combining} in the context of two-phase sampling. Under this assumption,} the selection probability $\rho(X)$ is a constant value,  denoted simply as $\rho$. Consider that it costs $C_1$ to collect $(Z, X, Y)$ for each unit and $C_2$ to collect $V$ for each unit. The total cost of the study can then be expressed as follows:
	\begin{align}
		\label{eq:cost}
		C = n (C_1 + \rho C_2).
	\end{align}
	The expression of the EIF for ATE under Assumption \ref{eq: indep-R} remains identical to \eqref{eq: phiz-exp}, {albeit} with the selection probability $\rho(X)$ substituted by the constant value $\rho$. Given that the selection probability $\rho(X)$ remains constant and can always be accurately specified, the estimator $\widehat\tau_{\MR}$ maintains consistency if either the outcome mean or the propensity score is accurate (i.e., if submodel $\mathcal{M}_1$ or $\mathcal{M}_2$ holds), as indicated by Theorem \ref{THM: MUL-ROBUSTNESS}. The correct specification of the imputation model $f(V\mid X;\theta)$ in $\mathcal{M}_3$ does not impact the consistency of $\widehat\tau_{\MR}$. This suggests that under Assumptions~\ref{ass:ignorability}--\ref{eq: indep-R}, $\widehat\tau_{\MR}$ transforms into a doubly robust estimator. However, even though the accurate specification of the imputation model $f(V\mid X;\theta)$ is not pivotal for ensuring consistency, it still can enhance estimation efficiency.

	To emphasize that $\widehat{\tau}_{\MR}$ depends solely on the parameters $(\widehat{\beta}, \widehat{\gamma}, \widehat{\theta})$, we denote the proposed estimator as $\widehat{\tau}_{\MR}(\widehat{\beta}, \widehat{\gamma}, \widehat{\theta})$ in this section. We summarize the influence function  for $\widehat{\tau}_{\MR}(\widehat{\beta}, \widehat{\gamma}, \widehat{\theta})$ in the following proposition, and the  detailed proofs are provided in the supplementary material.
	
	\begin{proposition}\label{lem:eif-tau-mr}
		Under Assumptions~\ref{ass:ignorability}-\ref{eq: indep-R},    the influence function of  $ \widehat{\tau}_{\MR} ( \widehat{\beta}, \widehat{\gamma}, \widehat{\theta}) $ is given by
		\begin{align*}
			\begin{aligned}
				\psi_{\MR}^{\tau}(O)
				& ={S_1(\beta^\star, \gamma^\star, \theta^\star)}+\frac{R}{\rho} S_2(\beta^\star, \gamma^\star, \theta^\star),
			\end{aligned}
		\end{align*} 
		where    $S_1(\beta,\gamma,\theta)  =  \delta(1,X;\gamma,\theta)-\delta(0,X;\gamma,\theta)-\tau$ and the explicit expression of $S_2(\beta,\gamma,\theta)$ is provided in the supplementary material due to its complexity.
	\end{proposition}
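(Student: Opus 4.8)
The plan is to treat $\widehat{\tau}_{\MR}(\widehat{\beta},\widehat{\gamma},\widehat{\theta})$ as a two-step M-estimator, with $\rho$ the known two-phase sampling fraction and $(\widehat{\beta},\widehat{\gamma},\widehat{\theta})$ fitted on the linked subsample, and to read off its influence function from a von Mises expansion in the nuisance estimators, in the spirit of \citet{newey1994large}. The useful starting point is the algebraic identity, valid at arbitrary nuisance values,
\[
\widehat{\varphi}_1(O)-\widehat{\varphi}_0(O)=\Delta(X;\widehat\gamma,\widehat\theta)+\frac{R}{\rho}\bigl\{A(O;\widehat\beta,\widehat\gamma)-\Delta(X;\widehat\gamma,\widehat\theta)\bigr\},
\]
where $\Delta(X;\gamma,\theta)=\delta(1,X;\gamma,\theta)-\delta(0,X;\gamma,\theta)$ depends only on the fully observed $X$ and $A(O;\beta,\gamma)$ is the augmented inverse-probability-weighted treatment contrast built from $\pi(X,V;\beta)$ and $\mu(\cdot,X,V;\gamma)$. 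Because Assumption~\ref{eq: indep-R} gives $R\indep(Z,X,V,Y)$ and $\E(R/\rho)=1$, taking expectations at fixed nuisance values cancels the two $\Delta$ terms, so the population functional targeted by $\widehat\tau_{\MR}$ equals $\E\{A(O;\beta,\gamma)\}$ and, importantly, does not involve $\theta$ at all. This already foreshadows both the $S_1+(R/\rho)S_2$ split and the absence of any $\widehat\theta$-correction.

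I would then use the standard decomposition
\[
\Pn\bigl\{\widehat\varphi_1-\widehat\varphi_0-\tau\bigr\}=\Pn\bigl\{\varphi_1^\star-\varphi_0^\star-\tau\bigr\}+(\Pn-\P)\bigl\{(\widehat\varphi_1-\widehat\varphi_0)-(\varphi_1^\star-\varphi_0^\star)\bigr\}+\P\bigl\{(\widehat\varphi_1-\widehat\varphi_0)-(\varphi_1^\star-\varphi_0^\star)\bigr\},
\]
where $\varphi_z^\star$ evaluates $\widehat\varphi_z$ at the probability limits $(\beta^\star,\gamma^\star,\theta^\star)$. The middle (empirical-process) term is $o_p(n^{-1/2})$ under consistency of the nuisance estimators together with a Donsker-type restriction, which holds automatically here by parametric smoothness. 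For the drift term I would Taylor-expand $\P\{\widehat\varphi_1-\widehat\varphi_0\}=\E\{A(O;\widehat\beta,\widehat\gamma)\}$ about $(\beta^\star,\gamma^\star)$; the $\theta$-derivative vanishes identically by the cancellation above, so the drift equals $\dot A_\beta^\star(\widehat\beta-\beta^\star)+\dot A_\gamma^\star(\widehat\gamma-\gamma^\star)+o_p(n^{-1/2})$, where $\dot A_\beta^\star$ and $\dot A_\gamma^\star$ are the Gateaux derivatives of $\eta\mapsto\E\{A(O;\eta)\}$ at the pseudo-truth.

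Next I would plug in the asymptotically linear expansions of the linked-data estimators, $\widehat\beta-\beta^\star=\Pn\{R\,\ell_\beta(O)\}+o_p(n^{-1/2})$ and $\widehat\gamma-\gamma^\star=\Pn\{R\,\ell_\gamma(O)\}+o_p(n^{-1/2})$, in which $\ell_\beta,\ell_\gamma$ are the relevant score- or moment-based influence functions and each carries an explicit factor $R$ because these nuisances are fitted on the $R=1$ subsample. Collecting terms yields $\widehat\tau_{\MR}-\tau=\Pn\{\psi_{\MR}^\tau(O)\}+o_p(n^{-1/2})$, and the pieces separate cleanly according to the factor $R$: the $R$-free part is exactly $\delta(1,X;\gamma^\star,\theta^\star)-\delta(0,X;\gamma^\star,\theta^\star)-\tau=S_1(\beta^\star,\gamma^\star,\theta^\star)$, while the $(R/\rho)\{A-\Delta\}$ piece inherited from $\varphi_z^\star$ together with the $R$-weighted corrections $R\,\dot A_\beta^\star\ell_\beta+R\,\dot A_\gamma^\star\ell_\gamma$, each rewritten with a leading $R/\rho$, combine to form $(R/\rho)S_2(\beta^\star,\gamma^\star,\theta^\star)$. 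This is precisely the claimed decomposition, and it identifies $S_2$ as $A(O;\beta^\star,\gamma^\star)-\Delta(X;\gamma^\star,\theta^\star)$ augmented by the correction terms $\rho\,\dot A_\beta^\star\ell_\beta(O)+\rho\,\dot A_\gamma^\star\ell_\gamma(O)$.

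The main obstacle is that closed-form bookkeeping: one must compute $\dot A_\beta^\star$, $\dot A_\gamma^\star$, and the influence functions $\ell_\beta,\ell_\gamma$ explicitly, then algebraically consolidate everything into a single expression for $S_2$ — a calculation that parallels, and reuses, the bias decomposition $B_1+B_2+B_3$ displayed after Theorem~\ref{THM: MUL-ROBUSTNESS}, which is why the explicit form of $S_2$ is relegated to the supplement. Two further points round out the argument: the same expansion with $\rho(X)$ replaced by the constant $\rho$ shows that $\widehat\tau_{\MR}$ is the plug-in realization of the efficient influence function of Theorem~\ref{THM: EIF}, so when all working models are correct one has $\dot A_\beta^\star=\dot A_\gamma^\star=0$, the correction terms drop, and $\psi_{\MR}^\tau$ reduces to $\psi_{\mathrm{eff}}^\tau$; and $\E\{\psi_{\MR}^\tau(O)\}=0$ — hence $\psi_{\MR}^\tau$ is a genuine influence function for $\tau$ — is equivalent to $\E\{A(O;\beta^\star,\gamma^\star)\}=\tau$, which under Assumptions~\ref{ass:ignorability}--\ref{eq: indep-R} holds within the union model by Theorem~\ref{THM: MUL-ROBUSTNESS}, the constancy of $\rho$ validating the selection component automatically.
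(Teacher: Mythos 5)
Your proposal is correct and follows essentially the same route as the paper's supplementary derivation: a Taylor (M-estimation) expansion of $\widehat{\tau}_{\MR}$ around the pseudo-true values, plugging in the asymptotically linear, $R/\rho$-weighted expansions of the linked-data estimators $\widehat\beta,\widehat\gamma$, and observing that the $\widehat\theta$-correction drops out because $\E(R/\rho)=1$ under Assumption~\ref{eq: indep-R}, so the terms collect into $S_1+\tfrac{R}{\rho}S_2$. Your cancellation-of-$\Delta$ framing is just a repackaging of the paper's explicit computation that $\E\{\partial\widehat\tau_{\MR}/\partial\theta\}=0$, so the two arguments coincide in substance.
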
 
	According to Proposition~\ref{lem:eif-tau-mr}, the asymptotic variance of the proposed estimator $\widehat{\tau}_\MR( \widehat{\beta}, \widehat{\gamma}, \widehat{\theta})$ under the constraint \eqref{eq:cost} can be calculated as follows:
	\begin{equation}
		\label{eq:var-tau-rand} 
			\begin{aligned}
				\mathrm{asyvar}\{\widehat\tau_\MR( \widehat{\beta}, \widehat{\gamma}, \widehat{\theta})\} &=\frac{C_2\Gamma_1}{C} \rho+ \frac{C_1\Gamma_2}{C}\frac{1}{\rho}+\frac{C_1\Gamma_1+C_2 \Gamma_2}{C}, 
			\end{aligned} 
	\end{equation}
	where $\Gamma_1=\E\left\{S_1^2(\beta^\star, \gamma^\star, \theta^\star)\right\} +2 \E\{ S_2(\beta^\star, \gamma^\star, \theta^\star) S_1(\beta^\star, \gamma^\star, \theta^\star)\mid R=1\} $, and  $\Gamma_2=\E\{S_2^2(\beta^\star,\\ \gamma^\star, \theta^\star)\mid R=1\}$. 
	By minimizing \eqref{eq:var-tau-rand} with respect to $\rho$, we obtain the optimal values $n^\star$ and $\rho^\star$, satisfying $n^\star= {C}/{(C_1+\rho^\star C_2)}$ and
	\begin{equation}
		\label{eq:local-opt}  
		\begin{aligned}
			\rho^\star= \begin{cases}\left(\dfrac{ {\Gamma_2 C_1}}{{\Gamma_1 C_2}} \right)^{1/2} & \text { if } \Gamma_2 C_1<\Gamma_1 C_2, \\ ~~~~~1, & \text { if } \Gamma_2 C_1\geq \Gamma_1 C_2.\end{cases}\end{aligned}
	\end{equation}
	We provide a graphical illustration of the solution  \eqref{eq:local-opt}  in Figure \ref{fig:subfigures}. The expression in \eqref{eq:local-opt} indicates that the optimal sample ratio  $\rho^\star$ is inversely proportional to the square root of the costs. Moreover,  $\rho^\star$ is also influenced by the variances of $S_1(\beta^\star, \gamma^\star, \theta^\star)$ and $S_2(\beta^\star, \gamma^\star, \theta^\star)$ as well as their covariance.   
	Under the same constraint \eqref{eq:cost}, \citet{yang2019combining} also explored the optimal allocation problem under Assumption \ref{eq: indep-R}. While their proposed error-prone estimator achieves doubly robustness in certain scenarios, it may not be the most efficient and hence may not yield the lowest estimation variance when all nuisance models are correctly specified. In contrast, the optimal sample size and sampling ratio recommended in \eqref{eq:local-opt} would result in the lowest estimation variance. 
	
	\begin{figure*}[!t]%
		\centering
		\resizebox{0.95\textwidth}{!}{
			\subfigure[\small $\Gamma_1 > 0$ and $\Gamma_2 C_1 < \Gamma_1 C_2$]{
				\begin{tikzpicture}
					\draw[-stealth] (0,0) -- (5,0) node[right] {$\rho$};
					\draw[-stealth] (0,-0.5) -- (0,4) node[above] { };
					
					\draw[domain=0.25:3.5,smooth,variable=\x,blue] plot ({0+\x},{0.25+1*\x+0.5/\x}) node[right] {$\mathrm{asyvar}(\widehat\tau_\MR)$};
					
					\draw (1.25,0) -- (1.25,-0.15) node[below] {$1$};         
					\draw[dashed] (0.707,0) -- (0.707,4) node[above] {$\rho^\star$};
					
				\end{tikzpicture}
			}
			\hfill
			\subfigure[\small $\Gamma_1 > 0$ and $\Gamma_2 C_1 \geq \Gamma_1 C_2$]{
				
				\begin{tikzpicture}
					\draw[-stealth] (0,0) -- (5,0) node[right] {$\rho$};
					\draw[-stealth] (0,-0.5) -- (0,4) node[above] { };
					
					\draw[domain=0.7:3.5,smooth,variable=\x,blue] plot ({0+\x},{-3.+1*\x+4/\x}) node[right] {$\mathrm{asyvar}(\widehat\tau_\MR)$};
					
					\draw (1.25,0) -- (1.25,-0.15) node[below] {$1$};             
					\draw[dashed] (1.25 ,0) -- (1.25 ,4) node[above] {$\rho^\star$};
					
				\end{tikzpicture}
			}
			\hfill
			\subfigure[\small $\Gamma_1 \leq 0$]{ 
				\begin{tikzpicture}
					\draw[-stealth] (0,0) -- (5,0) node[right] {$\rho$};
					\draw[-stealth] (0,-0.5) -- (0,4) node[above] { };
					
					\draw[domain=0.35:1.2,smooth,variable=\x,blue] plot ({0.5+\x},{1-2*\x+1/\x}) node[right] {$\mathrm{asyvar}(\widehat\tau_\MR)$};
					
					\draw (1.25,0) -- (1.25,-0.15) node[below] {$1$}; 
					\draw[dashed] (1.25 ,0) -- (1.25 ,4) node[above] {$\rho^\star$};
				\end{tikzpicture}
		}}
		\caption{The optimal    second-phase sample ratio $\rho^\star$ under different scenarios.}
		\label{fig:subfigures}
	\end{figure*}
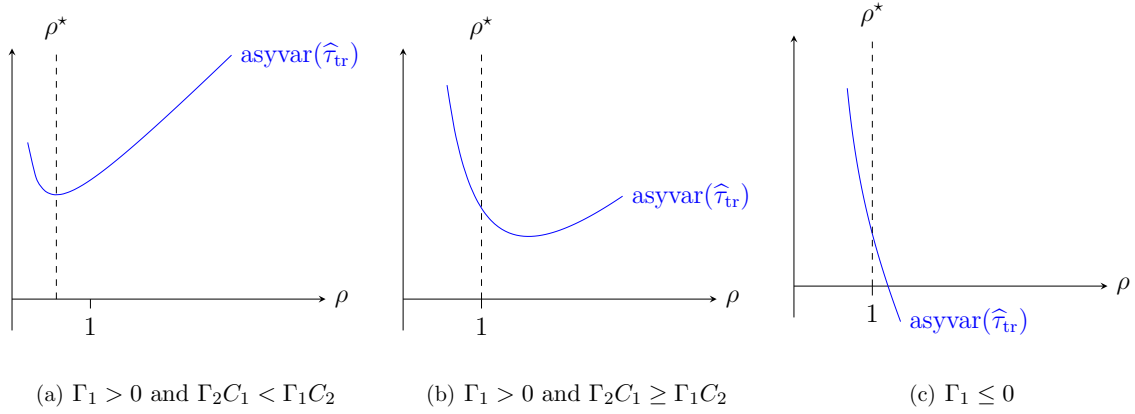
	
	\subsection{Other causal estimands}
	
	The proposed methodology can be extended to study the identification and semiparametric estimation of various  causal estimands based on the linked data. In this section, we focus on exploring the causal risk ratio (CRR) within the context of a binary outcome. Let $\xi$ denote the CRR, which represents the ratio of the average potential outcome under treatment   to that under control; that is, $\xi = {\mathbb{E}(Y_1)}/{\mathbb{E}(Y_0)}$.
	Parallel to Theorem \ref{THEOREM:IDENTIFIABLE}, we establish the identification results of  $\xi$ under Assumptions~\ref{ass: MAR-V}-\ref{ass:positivity-e}.
	\begin{proposition}
		\label{PROP:CRR-IDENTIF}
		Under Assumptions~\ref{ass: MAR-V}-\ref{ass:positivity-e}, the CRR is identifiable through  the  following formulas.
		\begin{itemize}
			\item[(a)] Based on  selection probability and propensity score,  	\begin{align*}
				\xi =\E\bigg\{ \frac{R}{\rho(X)}\frac{ Z}{ \pi(X,V )} Y \bigg\}\bigg/\E\bigg\{ \frac{R}{\rho(X)}\frac{1-Z}{1-\pi(X,V )}Y \bigg\} .
			\end{align*}
			
			\item[(b)] Based on selection probability and outcome mean,
			\begin{align*}   \xi =\E\left[\frac{R}{
					\rho(X)} \{\mu(1,X,V  ) \}\right]\bigg/\E\left[\frac{R}{
					\rho(X)} \{ \mu(0,X,V )\}\right]. 
			\end{align*}
			\item[(c)] Based on   imputation model and outcome mean,  	\begin{align*}
				\xi=\E\big\{\delta(1,X) \big\} \big/\E\big\{ \delta(0,X )\big  \}.
			\end{align*} 
		\end{itemize}
	\end{proposition}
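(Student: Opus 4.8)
The plan is to recognize that Proposition~\ref{PROP:CRR-IDENTIF} follows directly from the ingredients already used to prove Theorem~\ref{THEOREM:IDENTIFIABLE}: each of the three identification strategies there is, after a differencing step, assembled from a \emph{separate} identification of $\E(Y_1)$ and of $\E(Y_0)$. Since $\xi=\E(Y_1)/\E(Y_0)$ by definition, it suffices to record these component-wise identities and replace the difference by a ratio. Concretely, I would first establish
\[
\E(Y_1)=\E\Bigl\{\tfrac{R}{\rho(X)}\tfrac{Z}{\pi(X,V)}Y\Bigr\}=\E\Bigl\{\tfrac{R}{\rho(X)}\mu(1,X,V)\Bigr\}=\E\{\delta(1,X)\},
\]
together with the three analogous identities for $\E(Y_0)$ obtained by replacing $Z$, $\pi(X,V)$, $\mu(1,\cdot)$, $\delta(1,\cdot)$ by $1-Z$, $1-\pi(X,V)$, $\mu(0,\cdot)$, $\delta(0,\cdot)$. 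Dividing the $z=1$ expression by the $z=0$ expression within each representation then yields parts (a), (b), and (c), respectively.

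To prove the component-wise identities I would argue exactly as in the proof of Theorem~\ref{THEOREM:IDENTIFIABLE}. For (c): causal consistency gives $Y=Y_z$ on $\{Z=z\}$, so Assumption~\ref{ass:ignorability} implies $\mu(z,X,V)=\E(Y\mid Z=z,X,V)=\E(Y_z\mid X,V)$, and iterated expectations give $\E(Y_z)=\E\{\mu(z,X,V)\}=\E[\E\{\mu(z,X,V)\mid X\}]=\E\{\delta(z,X)\}$. For (b): Assumption~\ref{ass: MAR-V} implies $R\indep V\mid X$, hence $\E(R\mid X,V)=\rho(X)$, so conditioning on $(X,V)$ yields $\E\{R\mu(z,X,V)/\rho(X)\mid X,V\}=\mu(z,X,V)$, and a further expectation reduces (b) to (c), with Assumption~\ref{ass:positivity-e} making the weight well defined. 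For (a): Assumption~\ref{ass: MAR-V} also gives $R\indep(Z,Y)\mid(X,V)$, so conditioning on $(X,V)$ yields $\E\{RZY/(\rho(X)\pi(X,V))\mid X,V\}=\E(ZY\mid X,V)/\pi(X,V)$; then $ZY=ZY_1$ by consistency and $\E(ZY_1\mid X,V)=\pi(X,V)\E(Y_1\mid X,V)$ by Assumption~\ref{ass:ignorability}, leaving $\E(Y_1\mid X,V)$, whose marginal expectation is $\E(Y_1)$; Assumptions~\ref{ass:positivity-pi} and \ref{ass:positivity-e} guarantee the weights are finite. The $z=0$ versions are identical with $1-Z$ and $1-\pi$ in place of $Z$ and $\pi$.

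I do not anticipate a genuine obstacle, since the argument mirrors Theorem~\ref{THEOREM:IDENTIFIABLE} almost verbatim; the only point needing a sentence of care is well-definedness of the ratios, namely that each of the three denominators in (a)-(c) equals $\E(Y_0)$ under Assumptions~\ref{ass: MAR-V}-\ref{ass:positivity-e}, and that $\E(Y_0)\neq 0$ is a maintained regularity requirement for the CRR to be a meaningful estimand (automatic for a binary outcome whenever $\pr(Y_0=1)>0$). An even shorter alternative would be to note, in the proof of Theorem~\ref{THEOREM:IDENTIFIABLE}, that the intermediate identities for $\E(Y_1)$ and $\E(Y_0)$ are established separately, and then simply invoke them here.
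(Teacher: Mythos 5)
Your proposal is correct and matches the paper's approach: the paper simply states that the proof of Proposition~\ref{PROP:CRR-IDENTIF} follows the same structure as Theorem~\ref{THEOREM:IDENTIFIABLE} and omits the details, which is exactly what you carry out by identifying $\E(Y_1)$ and $\E(Y_0)$ separately via each of the three strategies and taking the ratio. Your added remark on well-definedness (requiring $\E(Y_0)\neq 0$) is a sensible, harmless refinement.
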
 
	
	The identification expressions for    CRR  are similar to those for the ATE in Theorem \ref{THEOREM:IDENTIFIABLE},  {albeit} with a transformation from additive scales to multiplicative ones. 
	These three identification formulations  can be  utilized to develop estimators based on three distinct parametric models, adopting a similar methodology as applied to the equations for  $\widehat\tau_1,\widehat\tau_2$, and $\widehat\tau_3$.  Additionally, we present  the EIF  for  causal risk ratio $\xi$ in the  following theorem.
	\begin{theorem}
		\label{THM: MTYPE-EIF}
		The efficient influence function for $\xi$  is given by 
		\begin{align*}  
			\varphi_{\mathrm{eff}}^\xi(O)&=   \{{\varphi_{1}(O)-{ \xi}\varphi_{0} (O)}\}/{\tau_{0}}  ,
		\end{align*} 
		where $\varphi_z(O)$ is defined   earlier in   \eqref{eq: phiz-exp}, and $\tau_0=\E(Y_0)$.
	\end{theorem}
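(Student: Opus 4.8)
The plan is to obtain the efficient influence function for $\xi = \E(Y_1)/\E(Y_0)$ from the already-established efficient influence function for the pair $(\E(Y_1),\E(Y_0))$ via the delta method, exploiting the fact that $\varphi_1(O)$ and $\varphi_0(O)$ from \eqref{eq: phiz-exp} are, by Theorem \ref{THM: EIF}, the (uncentered) efficient influence function contributions for $\tau_1 := \E(Y_1)$ and $\tau_0 := \E(Y_0)$ respectively. Concretely, the proof of Theorem \ref{THM: EIF} shows $\E\{\varphi_z(O)\} = \tau_z$ and that $\varphi_z(O) - \tau_z$ is the efficient influence function for $\tau_z$ in the nonparametric model under Assumptions \ref{ass: MAR-V}--\ref{ass:positivity-e}; hence $(\varphi_1(O)-\tau_1,\ \varphi_0(O)-\tau_0)$ is the joint efficient influence function for the bivariate parameter $(\tau_1,\tau_0)$. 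Since $\xi = g(\tau_1,\tau_0)$ with $g(a,b)=a/b$ a smooth map on the region $b\neq 0$ (which holds under positivity together with a nondegeneracy assumption on $Y_0$), the efficient influence function for $\xi$ is the pathwise derivative of $g$ applied to the joint efficient influence function.

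First I would record the gradient $\nabla g(\tau_1,\tau_0) = (1/\tau_0,\ -\tau_1/\tau_0^2)$. Then the chain rule for influence functions gives
\begin{align*}
\varphi_{\mathrm{eff}}^\xi(O) &= \frac{1}{\tau_0}\{\varphi_1(O)-\tau_1\} - \frac{\tau_1}{\tau_0^2}\{\varphi_0(O)-\tau_0\}.
\end{align*}
Next I would simplify: expanding and using $\tau_1/\tau_0 = \xi$,
\begin{align*}
\varphi_{\mathrm{eff}}^\xi(O) &= \frac{1}{\tau_0}\varphi_1(O) - \frac{\tau_1}{\tau_0} - \frac{\xi}{\tau_0}\varphi_0(O) + \frac{\tau_1}{\tau_0} = \frac{\varphi_1(O) - \xi\,\varphi_0(O)}{\tau_0},
\end{align*}
which is exactly the claimed expression. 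It remains to argue that this heuristic delta-method computation actually delivers the \emph{efficient} influence function and not merely \emph{an} influence function: this follows because in a nonparametric (locally saturated) model the tangent space is the whole $L_2^0(P)$, so the efficient influence function of a smooth functional is unique and is obtained by applying the derivative map to the efficient influence function of any finite-dimensional smooth reparametrization through which the functional factors; equivalently, one differentiates $\xi$ along a one-dimensional parametric submodel with score $s$, writes $\partial_t \xi_t|_{t=0} = \tau_0^{-1}\E\{\varphi_1 s\} - \xi\tau_0^{-1}\E\{\varphi_0 s\} = \E\{\varphi_{\mathrm{eff}}^\xi s\}$, and notes $\varphi_{\mathrm{eff}}^\xi \in L_2^0(P)$ (its mean is $\tau_0^{-1}(\tau_1 - \xi\tau_0) = 0$), so by uniqueness it is the efficient influence function.

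The main obstacle is not the algebra but the justification that $g(a,b)=a/b$ is appropriately differentiable at $(\tau_1,\tau_0)$ along all regular parametric submodels, which amounts to checking $\tau_0 = \E(Y_0) \neq 0$ (and ideally bounded away from $0$ over the submodels considered); this is a mild regularity condition one should state explicitly, and for a binary outcome it is implied whenever $\pr(Y_0=1)>0$. Given that, the remaining steps are the routine pathwise-derivative verification and the mean-zero check, both of which are immediate from the corresponding properties of $\varphi_1$ and $\varphi_0$ established in Theorem \ref{THM: EIF}.
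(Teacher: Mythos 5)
Your proposal is correct and follows essentially the same route as the paper, which likewise derives the result by the Gateaux-derivative (delta-method) quotient rule applied to the efficient influence functions of $\tau_1$ and $\tau_0$ from Theorem \ref{THM: EIF} and then simplifies using $\tau_1/\tau_0=\xi$. Your additional remarks on the mean-zero check, the nonparametric tangent space (hence uniqueness of the influence function), and the condition $\tau_0\neq 0$ only make explicit regularity points the paper leaves implicit.
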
  
	
	We utilize the EIF  $\varphi_{\mathrm{eff}}^\xi(O)$ to estimate the causal risk ratio, incorporating estimates of  nuisance parameters.  Theorem \ref{THM: MTYPE-EIF} motivates  the following  plug-in estimator:  {\begin{equation*} 
			\begin{aligned}
				\widehat \xi_{\MR}&=\dfrac{\Pn\{{\widehat\varphi_{1}(O) } \}-\widehat\xi\Pn\{{ \widehat\varphi_{0} }(O)\}}{{\Pn \{\widehat\varphi_{0} (O)\}}}  +\widehat\xi 
				=\dfrac{\Pn\{{\widehat\varphi_{1}(O) } \} }{{\Pn \{\widehat\varphi_{0} (O)\}}}   
				.
			\end{aligned}
		\end{equation*}  
		The above equation    implies that the final form of $\widehat{\xi}_{\MR}$ remains unchanged,   regardless of the choice of the initial plug-in estimator  $\widehat\xi$, as long as the efficient estimator $\Pn\{\widehat{\varphi}_{0}(O)\}$ is used for estimating $\tau_0$. The estimator $\widehat{\xi}_{\MR}$ can be seen as a direct sample mean ratio, employing the efficient influence function $\varphi_z(O)$. Thus,
		the intrinsic triply robust property of $\widehat{\xi}_{\MR}$ naturally holds. Furthermore, $\widehat{\xi}_{\MR}$  attains the semiparametric efficiency bound under the intersection submodel $\cap_{j=1}^3\mathcal{M}_j$.  We summarize these findings in the following theorem.
	}
	
	\begin{theorem}
		\label{THM: MTYPE-MUL-ROBUSTNESS}
		Under standard regularity conditions, $\widehat{\xi}_\MR$ is  consistent and asymptotically normal {under} the union model $\mathcal{M}_{\mathrm{union}}$. Moreover, if all {the }models in $\mathcal{M}_1$, $\mathcal{M}_2$ and $\mathcal{M}_3$ are correctly specified, $\widehat{\xi}_\MR$ attains the semiparametric efficiency bound.
	\end{theorem}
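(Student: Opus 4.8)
The plan is to reduce the claim for $\widehat{\xi}_\MR$ to the already-established properties of the ATE-type building blocks $\Pn\{\widehat{\varphi}_z(O)\}$ from Theorem \ref{THM: MUL-ROBUSTNESS}, exploiting the fact that $\widehat{\xi}_\MR = \Pn\{\widehat{\varphi}_1(O)\}/\Pn\{\widehat{\varphi}_0(O)\}$ is simply a ratio of two such estimators. First I would recall, from the bias decomposition underlying Theorem \ref{THM: MUL-ROBUSTNESS}, that under the union model $\mathcal{M}_{\mathrm{union}}$ we have $\Pn\{\widehat{\varphi}_z(O)\} \to_p \E(Y_z) = \tau_z$ for $z = 0,1$, with the three bias terms $B_1, B_2, B_3$ each vanishing under the appropriate pairwise-correct specification; the same argument applies verbatim with the treatment level $z$ fixed at either value, so both numerator and denominator are consistent on $\mathcal{M}_{\mathrm{union}}$. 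Since $\tau_0 = \E(Y_0) > 0$ for a binary outcome under the positivity assumptions (Assumption \ref{ass:positivity-pi}), the continuous mapping theorem gives $\widehat{\xi}_\MR \to_p \tau_1/\tau_0 = \xi$, establishing consistency.

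For asymptotic normality, I would first obtain a joint linear expansion of $(\Pn\{\widehat{\varphi}_1(O)\}, \Pn\{\widehat{\varphi}_0(O)\})$. Under the intersection submodel $\cap_{j=1}^3 \mathcal{M}_j$ this is immediate: by Theorem \ref{THM: MUL-ROBUSTNESS} each component is asymptotically linear with influence function $\varphi_z(O) - \tau_z$, so the pair is jointly asymptotically normal with mean zero. The delta method applied to the map $(a,b) \mapsto a/b$ at $(\tau_1, \tau_0)$, whose gradient is $(1/\tau_0, -\tau_1/\tau_0^2)$, then yields that $\widehat{\xi}_\MR$ is asymptotically linear with influence function $\{\varphi_1(O) - \tau_1 - \xi(\varphi_0(O) - \tau_0)\}/\tau_0 = \{\varphi_1(O) - \xi\varphi_0(O)\}/\tau_0$, which is exactly $\varphi_{\mathrm{eff}}^\xi(O)$ from Theorem \ref{THM: MTYPE-EIF}; hence $\widehat{\xi}_\MR$ attains the semiparametric efficiency bound for $\xi$ on the intersection model. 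On the broader union model the individual influence functions in the expansion are replaced by the generally less efficient influence functions of $\widehat{\tau}_\MR$ under whichever submodel holds, but the delta-method step is unchanged, giving $\sqrt{n}$-consistency and asymptotic normality (with a larger asymptotic variance) throughout $\mathcal{M}_{\mathrm{union}}$.

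The main obstacle is the intermediate remainder control: one must verify that the $o_p(n^{-1/2})$ remainder in the expansion of $\Pn\{\widehat{\varphi}_z(O)\}$ on $\mathcal{M}_{\mathrm{union}}$ — which involves the estimated nuisance parameters $(\widehat{\alpha}, \widehat{\beta}, \widehat{\gamma}, \widehat{\theta})$ entering through $\rho(X;\widehat{\alpha})$, $\pi(X,V;\widehat{\beta})$, $\mu(z,X,V;\widehat{\gamma})$, and the Monte-Carlo-integrated $\delta(z,X;\widehat{\gamma},\widehat{\theta})$ — is genuinely negligible after the nonlinear $a/b$ transformation, i.e. that the linearization error of the ratio does not interact badly with the nuisance-estimation error. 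This is handled by standard M-estimation / stacked-estimating-equation arguments (cf.\ \citealp{newey1994large}): one writes $(\widehat{\xi}_\MR, \widehat{\alpha}, \widehat{\beta}, \widehat{\gamma}, \widehat{\theta})$ as the solution of a joint estimating equation, notes that the $\xi$-component of the stacked influence function is insensitive to first-order perturbations of the nuisance estimators precisely because of the Neyman orthogonality already exploited in Theorem \ref{THM: MUL-ROBUSTNESS}, and concludes that the ratio inherits asymptotic linearity. I would relegate the explicit remainder bookkeeping to the supplementary material, as it parallels the proof of Theorem \ref{THM: MUL-ROBUSTNESS} with the extra, routine, delta-method layer.
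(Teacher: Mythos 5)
Your proposal is correct and essentially the paper's route: the paper also treats $\widehat{\xi}_{\MR}=\Pn\{\widehat\varphi_1(O)\}/\Pn\{\widehat\varphi_0(O)\}$ as a ratio of the ATE-type components, omits the union-model robustness part as "analogous to Theorem \ref{THM: MUL-ROBUSTNESS}" (which you flesh out via the $B_1,B_2,B_3$ decomposition), and proves local efficiency by showing $\sqrt{n}(\widehat{\xi}_{\MR}-\widehat{\xi}_0)=o_p(1)$ for the one-step estimator $\widehat{\xi}_0=\Pn\{\widehat\varphi_1(O)-\xi\widehat\varphi_0(O)\}/\tau_0+\xi$, which is exactly the linearization your delta-method step performs. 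The only small inaccuracy is your justification that $\tau_0>0$: this does not follow from Assumption \ref{ass:positivity-pi} but is an implicit requirement for $\xi=\tau_1/\tau_0$ (and its EIF) to be well defined, which is how the paper tacitly treats it.
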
  
	It is important to note that although $\widehat \xi_{\MR}$ incorporates several nuisance models, achieving its consistency and asymptotic normality does not necessitate the accurate specification of all nuisance functions. The estimator $\widehat{\xi}_\MR$ remains robust in delivering  consistent estimates, even in the presence of  some misspecified nuisance functions.  
	
	\subsection{Connection with missing data} 
	\label{sec:missing} 
	In this section, we discuss connections between our framework and existing methods in missing data analysis, focusing on the context of covariates missing at random. The MAR assumption  in our setting is stated as follows:
	
	\begin{assumption}
		\label{assumption: MAR2}
		$R\indep V\mid (Z,X,Y)$. 
	\end{assumption}
	
	With a slight abuse of notation, we proceed to utilize similar notations for the nuisance functions, but now {tailored to fit} within the framework of  Assumption \ref{assumption: MAR2}. For instance, we define the conditional probability of selection as $\rho(Z,X,Y) = \pr(R=1 \mid Z,X,Y)$ and the density of $V$ given $(Z,X,Y)$ as $f(V \mid Z,X, Y)$. 
	Accordingly, we parameterize the corresponding  nuisance models using $\rho(Z,X,Y;\alpha)$ and $f(V \mid Z,X, Y;\theta)$ as done in the previous section.   

 Currently, there is no existing proposal in the literature that achieves multiply robust estimation within the context of covariates missing at random.  
	\citet{williamson2012doubly} introduced a unified approach for estimating causal effects using a multiply robust methodology under Assumption  \ref{assumption: MAR2}. They argued that their estimator would remain consistent if (i) either the  $\pi(X,V;\beta)$  or the  $\mu(Z,X,V;\gamma)$ is correct, and (ii) either the $f(V\mid Z,X,Y;\theta)$ or the $\rho(Z,X,Y;\alpha)$ is correct.   This implies a potential quadruple robustness property for their proposed estimator.   However, as elucidated in \citet{evans2020coherent}, achieving multiple robustness entails placing parametric restrictions on specific components of the observed data likelihood while leaving others unrestricted. Consequently,  the multiply robust property claimed by \citet{williamson2012doubly} might not be attainable due to the lack of variation independence among the parametric models for the different components of the observed likelihood.  In the context of estimating  average treatment effect among the treated, 
 \citet{evans2020coherent} proposed a method of parametrizing the likelihood to ensure variation independence among the components of the observed likelihood, and demonstrated that their estimator achieves double robustness. Their method is based on modeling conditional odds ratios \citep{yun2007semiparametric}, which can pose challenges in terms of accurate specification and estimation in practice. 

	As discussed below Assumption  \ref{ass: MAR-V}, while Assumption  \ref{assumption: MAR2} is less restrictive, its validity enables us to employ observed data to test  Assumption  \ref{ass: MAR-V}. In this paper, we have introduced a genuinely multiply robust estimator for ATE under Assumption~\ref{ass: MAR-V}. The nuisance models employed in this paper are variational independent and   commonly used in causal inference for robust estimation. Particularly, in the scenario of complete-case  analysis with $\rho(X)=1$, the propensity score   $\pi(X,V;\beta)$ and the outcome regression   $\mu(Z,X,V;\gamma)$ are frequently utilized. One doubly robust approach to estimating   ATE  is through the  augmented inverse-propensity weighting  estimator \citep{bang2005doubly}, which is given by:  
	\begin{equation*} 
			\begin{aligned}
				\Pn\begin{bmatrix}
					{\dfrac {Z\left\{Y-\mu(1,X,V;\widehat\gamma)\right\}}{\pi(X,V;\widehat\beta)}+\mu(1,X,V;\widehat\gamma)  }   
					-{\dfrac{(1-Z)\left\{Y-\mu(0,X,V;\widehat\gamma)\right\}}{1-\pi(X,V;\widehat\beta)}+\mu(0,X,V;\widehat\gamma )} \end{bmatrix}.
			\end{aligned}  
	\end{equation*}


	
	\section{Numerical studies}\label{sec:numerical}
	\subsection{Simulation}
	\label{sec: simulation}
	In this section, we evaluate the finite-sample performance of the proposed estimators under various settings. We consider the following data-generating mechanism:
	\begin{itemize} 
		\item[(a)] Fully observed covariate: $X\sim N(0,1)$.
		\item[(b)] Selection mechanism: $\pr(R=1\mid X)=\mathrm{expit}(0.75 + 0.5 X )$, where $\mathrm{expit}(u)=\exp(u)/\{1+\exp(u)\}.$
		\item[(c)] Partially observed covariate: $V\mid X \sim N(0.5+ 0.5X,  1 )$.
		\item[(d)] Treatment assignment: $\pr(Z=1\mid X,V)=\mathrm{expit}(0.5+0.5X+0.6V)$.
		\item[(e)] Observed outcome: $Y \mid Z,X,V\sim N(0.5+0.5X+0.5V+2Z+2ZX+ZV,1)$. 
	\end{itemize}
	We are interested in estimating the average treatment effect $\tau$, with a true value of $2.5$. We assess the finite-sample properties of four semiparametric estimators: $\widehat{\tau}_1$, $\widehat{\tau}_2$, $\widehat{\tau}_3$, and the triply robust estimator $\widehat{\tau}_{\MR}$. Our analysis includes sample sizes of $1000$, $5000$, and $10000$. To evaluate the performance of these estimators in scenarios with model misspecification, we adopt an approach similar to that of \citet{kang2007demystifying}. We introduce transformed variables, denoted as $X^{\dagger}= {\left|X\right|}^{1/2}$ and $V^{\dagger}= {\left|V\right|}^{1/2}$, and examine how the estimators perform when using $X^\dagger$ and $V^\dagger$ in place of $X$ and $V$ within the working models.
	More specifically, we present the results from the following five scenarios:
	\begin{itemize}
		\item[(i)] All the models are correct.
		\item[(ii)] Only models $\rho(X;\alpha)$ and $\pi(X,V;\beta)$ in $\mathcal{M}_1$  are correct.
		\item[(iii)] Only models $\rho(X;\alpha)$ and $\mu(Z,X,V;\gamma)$ in $\mathcal{M}_2$  are correct.
		\item[(iv)] Only models   $\mu(Z,X,V;\gamma)$  and   $f( V\mid X;\theta)$  in $\mathcal{M}_3$  are correct.
		\item[(v)]  None of the models are correct.
	\end{itemize}

	Table \ref{tab: MC-prop} summarizes the performance of the four estimators   $\widehat{\tau}_1$,  $\widehat{\tau}_2$,  $\widehat{\tau}_3$ and  $\widehat\tau_{\MR}$  under the above five scenarios. The simulation results are based on 200 Monte Carlo runs and include metrics of mean bias, empirical standard error, and 95\% coverage probability. For clarity,  the bias, empirical standard deviation, and the 95\% coverage probability are all multiplied by $10^2$. Across all sample sizes, the four estimators perform well in scenario (i). As expected, the triply robust estimator $\widehat\tau_{\MR}$ exhibits small bias in the first four scenarios (i)-(iv). In contrast, the other three  estimators $\widehat{\tau}_1$, $\widehat{\tau}_2$, $\widehat{\tau}_3$   show substantial bias when their respective models are misspecified.  For example, the  weightings-based estimator $\widehat{\tau}_1$ only maintains consistency in scenarios (i) and (ii), while exhibiting notable bias when either weighting model is incorrectly specified in scenarios (iii) and (iv). The 95\% coverage probabilities of the three semiparametric estimators $\widehat{\tau}_1$, $\widehat{\tau}_2$, and $\widehat{\tau}_3$ align with the nominal level solely when their underlying models are accurately specified, whereas the 95\% coverage probabilities of $\widehat{\tau}_{\MR}$ closely approximate the nominal level in all four scenarios.
	These results confirm our previous theoretical findings and demonstrate the advantages of the proposed triply robust estimator. We provide additional simulation studies for a binary outcome in the supplementary material, which corroborate    the simulation results presented here.

	\subsection{Empirical example}  
	\label{sec: application}
	In this section, we investigate the causal effect of physical activity on healthcare expenditures among adults in the United States. We analyze the primary data from the 2018 Medical Expenditure Panel Survey (MEPS). The primary dataset originates from the 2018 full-year consolidated file of MEPS, mainly focusing on demographics, health status, and health insurance. Additionally, we utilize an auxiliary data also obtained from MEPS, specifically drawn  from the 2018 job file. This auxiliary dataset  is designed to collect supplementary job-related information.

	The primary dataset consists of the following variables: the treatment variable $Z$, which indicates whether an individual engages in at least half an hour of moderate to vigorous physical activity at least five times a week; the outcome variable $Y$, which represents healthcare expenditures; and the baseline covariates $X$, including factors such as age, sex, race, marital status, BMI, education, family poverty level, smoking status, cancer diagnosis, and health insurance coverage.  The primary dataset consists of 18,774 subjects, with 9,174 subjects having $Z=1$ and 9,600 subjects having $Z=0$. The auxiliary dataset consists of  6,118 subjects, and the resulting linked  dataset includes 5,226 subjects. This linked  dataset incorporates additional significant confounding variables, including working hours and income, denoted by   $V$.

	Table S2 in the supplementary material presents highly similar summary statistics of the demographic characteristics between the two datasets. Furthermore, under the MAR assumption, we conducted an empirical test to assess Assumption \ref{ass: MAR-V}. This test involved using logistic regression to examine the selection mechanism  based on the fully observed variables $(Z, X, Y)$. The estimated coefficient of $Y$ is statistically significant but very small, approximately less than 10$^{-5}$, while the estimated coefficient of $Z$ is not significant. These results  suggest that there is no significant evidence of violating Assumption \ref{ass: MAR-V} when the MAR assumption holds.   
	
	We employ a linear model for the outcome regression model, a normal distribution  for the imputation model, and logistic models for the selection mechanism and propensity score. The asymptotic variance is estimated using the nonparametric bootstrap method. Table \ref{tab: application-res} presents the estimated point estimates, empirical standard errors, and 95\% confidence intervals for the four estimators. The results from  four estimators are statistically significant and demonstrate a consistent pattern, suggesting that regular physical activity may lead to reduced healthcare costs. These findings align with previous research  \citep{kokkinos2011physical,booth2012lack}.
	
	\begin{table*}[t]
		\caption{Estimates of the causal effect of physical activities on total healthcare expenditures (scaled by 1000 US dollars). 
		}
		\label{tab: application-res}
		\centering
		\resizebox{0.9\textwidth}{!}{
			\tabcolsep=0pt
			\begin{tabular*}{\textwidth}{@{\extracolsep{\fill}}cccc@{\extracolsep{\fill}}}
				\toprule
				Estimators      &    Point estimate    & Standard error   & 95$\%$ Confidence interval  \\ 
				\hline\addlinespace[1mm]
    		$\widehat\tau_1~~$   &     -1.00        & 0.51        & (-1.99, -0.01)       \\
				$\widehat\tau_2~~$   &    -1.06       & 0.48           & (-2.00, -0.11)         \\
				$\widehat\tau_3~~$   &   -1.20        & 0.55        & (-2.28, -0.12)         \\
				$\widehat\tau_\MR$ &    -1.35       & 0.56         & (-2.45,  -0.24)        \\
				\bottomrule
			\end{tabular*}
		}
	\end{table*}
	
	\section{Disscussion} 
	\label{sec:concludes}
	It is crucial to   consider potential selection bias when estimating  causal effects from the linked data. Neglecting this bias could lead to biased estimates and incorrect conclusions.  Our study has introduced three nonparametric strategies for identifying ATE. Moreover, we have presented a triply robust and locally efficient estimator that is easy to implement.  We have also investigated the asymptotic properties of our proposed estimator, showcasing its consistency and asymptotic normality even when employing various flexible machine learning techniques for nuisance model estimation. Additionally, we have discussed potential extensions of the proposed method, including optimal sampling designs and the estimation of nonlinear causal estimands.

	There are several possible extensions for future work.    Firstly, while this paper focuses on constructing a triply robust estimator from two data sources, it would be of interest to develop theories and methodologies for the scenario with multiple data sources. 
	Secondly, the problem of optimal sample size and sample rate allocation could be further explored under  Assumption \ref{ass: MAR-V} or Assumption \ref{assumption: MAR2}.   
	The study of these issues is beyond the scope of this paper and we leave them as future research topics.

		\begin{landscape}
		\begin{table}
			\caption{Bias, standard deviation (SD), and 95\% coverage probability (CP) of the proposed estimators with a continuous outcome under different sample sizes. The bias, SD and CP are multiplied by $10^2$. 
   }
			\label{tab: MC-prop}
			\centering
			\begin{threeparttable}    
				\renewcommand{\arraystretch}{0.75}
				\resizebox{0.835\columnwidth}{!}{%
				\begin{tabular}{cccccccccccccccccccc}
    \toprule
    
    \specialrule{0em}{3pt}{3pt}
    
    \multicolumn{3}{c}{Model} & 
    \multicolumn{1}{c}{} &
    \multicolumn{3}{c}{$\widehat\tau_{1}$} & 
    \multicolumn{1}{c}{} &
    \multicolumn{3}{c}{$\widehat\tau_{2}$} & 
    \multicolumn{1}{c}{} &
    \multicolumn{3}{c}{$\widehat\tau_{3}$}& 
    \multicolumn{1}{c}{} &
    \multicolumn{3}{c}{$\widehat\tau_{\MR}$}  \\
    \specialrule{0em}{3pt}{3pt}

    \cmidrule(lr){1-3}	\cmidrule(lr){5-7} 	\cmidrule(lr){9-11}	\cmidrule(lr){13-15}	\cmidrule(lr){17-19}
    
    \multirow{2}{*}{$\mathcal{M}_1$} &	\multirow{2}{*}{$\mathcal{M}_2$} & 	\multirow{2}{*}{$\mathcal{M}_3$} &\multicolumn{1}{c}{}
    &\multicolumn{3}{c}{Sample Size}&\multicolumn{1}{c}{}&\multicolumn{3}{c}{Sample Size}&\multicolumn{1}{c}{}&\multicolumn{3}{c}{Sample Size}&\multicolumn{1}{c}{}&\multicolumn{3}{c}{Sample Size} \\
    \cmidrule(lr){5-7} 	\cmidrule(lr){9-11}	\cmidrule(lr){13-15}	\cmidrule(lr){17-19}
    & &	 & & 1000  & 5000 & 10000 	&& 1000  & 5000 & 10000	&& 1000  & 5000 & 10000 && 1000  & 5000 & 10000 \\
    \specialrule{0em}{0pt}{3pt}
    \midrule
    
    &Bias &  &  & & & & & & & & & & &&&&&\\
    $\checkmark $& $\checkmark $& $\checkmark $ & & -1 & 1 & 0 & & 0 & 1 & 0& &0 & 1 & 0& & 0 & 1 & 0\\
    $\checkmark $ & $\times$& $\times$ & & -1 & 1 & 0& & 157 & 158 & 158& &162 & 164 & 163&& -2 & 2 & 0\\
    $\times$ & $\checkmark $& $\times$ && 117 & 119 & 118&& 0 & 1 & 0& &42 & 43 & 43&& 0 & 1 & 0\\
    $\times$ & $\times$& $\checkmark $&& 163 & 164 & 163&& 40 & 41 & 40&& 0 & 1 & 0&& 0 & 1 & 0\\
    $\times$ & $\times$& $\times$&& 163 & 164 & 163&& 162 & 164 & 163 && 162 & 164 & 163&& 163 & 165 & 164 \\
    \addlinespace[1mm]
    
    & { SD } &  &  & & & & & & & & & & &&&&&\\
    $\checkmark $& $\checkmark $& $\checkmark $ && 18 & 7 & 5&& 12 & 6 & 4&& 12 & 6 & 4&& 12 & 6 & 4  \\
    $\checkmark $ & $\times$& $\times$  && 18 & 7 & 5 && 19 & 8 & 6&& 19 & 8 & 6&& 31 & 12 & 8   \\
    $\times$ & $\checkmark $& $\times$ && 19 & 7 & 5 && 12 & 6 & 4&& 14 & 9 & 7&& 12 & 6 & 4     \\
    $\times$ & $\times$& $\checkmark $ && 19 & 8 & 6&& 13 & 6 & 4 && 12 & 6 & 4&& 12 & 6 & 4\\
    $\times$ & $\times$& $\times$ && 19 & 7 & 6&& 19 & 8 & 6&& 19 & 8 & 6 && 19 & 8 & 6\\
    \addlinespace[1mm]
    
    &    CP  &  &  & & & & & & & & & & &&&&&\\
    $\checkmark $& $\checkmark $& $\checkmark $ && 95 & 95 & 96&& 97 & 95 & 95&& 97 & 95 & 94&& 97 & 96 & 94\\
    $\checkmark $ & $\times$& $\times$ && 95 & 95 & 96&& 0 & 0 & 0&& 0 & 0 & 0&& 93 & 93 & 94\\
    $\times$ & $\checkmark $& $\times$ && 0 & 0 & 0&& 97 & 95 & 95&& 17 & 5 & 0&& 97 & 97 & 96\\
    $\times$ & $\times$& $\checkmark $ && 0 & 0 & 0&& 19 & 0 & 0 && 97 & 95 & 94&& 96 & 96 & 95\\
    $\times$ & $\times$& $\times$ && 0 & 0 & 0&& 0 & 0& 0&& 0 & 0 & 0&& 0 & 0 & 0\\

    \bottomrule
\end{tabular}
} 
			\end{threeparttable}       
		\end{table}
	\end{landscape}
	
	
	
	


	\bibliographystyle{apalike}

	\bibliography{reference}

\renewcommand{\theproposition}{S\arabic{proposition}}
\renewcommand{\thetheorem}{S\arabic{theorem}}
\renewcommand{\theassumption}{S\arabic{assumption}}
\renewcommand{\thesection}{S\arabic{section}}
\renewcommand{\theequation}{S\arabic{equation}}
\renewcommand{\thelemma}{S\arabic{lemma}} 
\renewcommand{\thetable}{S\arabic{table}}

{\centering \section*{Supplementray Material}}
	  In the Supplementary Material, we provide proofs of theorems and propositions in
the main paper. We also provide additional details for the simulations and data
application.

\setcounter{section}{0}
\section{The proof of Theorem \ref{THEOREM:IDENTIFIABLE}}
\label{sec:proof-tr-iden}
\begin{proof}
We proceed as follows:
    \begin{itemize}
        \item[(a)] Based on  selection probability and propensity score,  	\begin{align*} 
&\E(Y_1-Y_0)\\&=\E\left\{\E\left(Y_1-Y_0\mid X,V\right)\right\}\\&=\E\left\{\E\left(Y_1\mid X,V\right)\right\}-\E\left\{\E\left(Y_0\mid X,V\right)\right\}\\&=\E\left\{\E\left(Y_1\mid Z,X,V\right)\right\}-\E\left\{\E\left(Y_0\mid Z,X,V\right)\right\}\\&=\E\left\{\E\left(Y\mid Z=1,X,V\right)\right\}-\E\left\{\E\left(Y\mid Z=0,X,V\right)\right\}\\&=\E\left[\E\left\{\left.\frac{ZY}{\pi(X,V)}\right|X,V\right\}\right]-\E\left[\E\left\{\left.\frac{\left(1-Z\right)Y}{1-\pi(X,V)}\right|X,V\right\}\right]\\&=\E\left[\E\left\{\left.\frac{ZY}{\pi(X,V)}\right|X,V,R=1\right\}\right]-\E\left[\E\left\{\left.\frac{\left(1-Z\right)Y}{1-\pi(X,V)}\right|X,V,R=1\right\}\right]\\&=\E\left[\E\left\{\left.\frac R{\rho(X)}\frac{ZY}{\pi(X,V)}-\frac R{\rho(X)}\frac{\left(1-Z\right)Y}{1-\pi(X,V)}\right|X,V\right\}\right]\\&=\E\left\{\frac R{\rho(X)}\frac{ZY}{\pi(X,V)}-\frac R{\rho(X)}\frac{\left(1-Z\right)Y}{1-\pi(X,V)}\right\}.
	\end{align*}
 \item[(b)] Based on selection probability and outcome mean, 
 \begin{align*}
 \E&\left(Y_1-Y_0\right)\\&=\E\left\{\E\left(Y_1-Y_0\mid X,V\right)\right\}\\&=\E\left\{\E\left(Y_1\mid X,V\right)\right\}-\E\left\{\E\left(Y_0\mid X,V\right)\right\}\\&=\E\left\{\E\left(Y_1\mid Z,X,V\right)\right\}-\E\left\{\E\left(Y_0\mid Z,X,V\right)\right\}\\&=\E\left\{\E\left(Y\mid Z=1,X,V\right)\right\}-\E\left\{\E\left(Y\mid Z=0,X,V\right)\right\}\\&=\E\left\{\E\left(Y\mid Z=1,X,V,R=1\right)\right\}-\E\left\{\E\left(Y\mid Z=0,X,V,R=1\right)\right\}\\&=\E\left[\E\left\{\left.\frac{RY}{\pr(R\mid Z=1,X,V)}\right|Z=1,X,V\right\}\right]\\&~~~~~~~-\E\left[\E\left\{\left.\frac{RY}{\pr(R\mid Z=0,X,V)}\right|Z=0,X,V\right\}\right]\\&=\E\left\{\E\left\{\left.\frac{RY}{\rho(X)}\right|Z=1,X,V\right\}\right\}-\E\left\{\E\left\{\left.\frac{RY}{\rho(X)}\right|Z=0,X,V\right\}\right\} {.}
	\end{align*}
 \item[(c)] Based on imputation model and outcome mean, we have
 \begin{align*}
\E(Y_1-Y_0\mid X )  &=
\E\{\mu(1,X,V)- \mu(0,X,V )\mid X \}= \delta(1,X)-\delta(0,X) {,}
\end{align*}  
where $\mu(Z,X,V)=\E(Y\mid Z,X,V,R=1)$ is identifiable from the  observed data. Therefore, we have
 \begin{align*}
\E (Y_1-Y_0 ) =\E\left\{\delta(1,X)-\delta(0,X)\right\} .
	\end{align*}
    \end{itemize}
\end{proof} 
 \section{The proof of Theorem \ref{THM: EIF}}
 \begin{proof}
  
 The observed likelihood and full-data likelihood  can be factorized as:   
 \begin{align*}
     f_{\mathrm{obs}}&(X,V,Z,Y,R) =f(X,V,Z,Y,R=1  )^R f(X, Z,Y,R=0 )^{(1-R)},\\
     f_{\mathrm{full}}&(X,V,Z,Y,R )=f(X,V,Z,Y,R=1 )^R f(X,V, Z,Y,R=0 )^{(1-R)} {,}
 \end{align*} 
 To derive the EIFs, we consider a one-dimensional parametric submodel, $f  (X,V,Z,Y,R;t )$, which contains the true model $f(X,V,Z,Y,R )$ at $t = 0$, that is, $f   (X,V,Z,Y,R;t )\vert_{t=0} = f(X,V,Z,Y,R )$. We use $t$ in the subscript to denote the quantity with respect to the submodel. 
Then, the observed and full-data score functions of the submodel are: 
  \begin{align*}
      \begin{small}
          \begin{aligned}
S_{\mathrm{obs}}(X,V,Z,Y,R;t)&=\partial\log f_{\mathrm{obs}}(X,V,Z,Y,R;t)/\partial t\\&=R{\partial}\log f(X,V,Z,Y,R=1;t){/\partial t} +(1-R){\partial\log}f(X,Z,Y,R=0;t){/\partial t}.\\S_{\mathrm{full}}(X,V,Z,Y,R;t)&=\partial\log f_{\mathrm{full}}(X,V,Z,Y,R;t)/\partial t\\&=R{\partial}\log f(X,V,Z,Y,R=1;t){/\partial t} +(1-R){\partial}\log f(X,V,Z,Y,R=0;t){/\partial t}.
 \end{aligned} 
      \end{small}
  \end{align*}
 Hence, we have that $$RS_{\mathrm{obs}} (X,V,Z,Y,R; {t})=RS_{\mathrm{full}}(X,V,Z,Y,R ; {t}).$$
        We will use the semiparametric theory in \citet{bickel1993efficient} to derive the EIFs.  The EIF for $\tau  $, denoted by $\varphi (O)$, must satisfy
\begin{align*}
\left.\frac{\partial}{\partial {t}} \tau_{{ t}}\right|_{{t}=0}=\left.\E\{\varphi (O) \cdot S_\mathrm{obs
}
(Y, Z, X,V,R ; {t})\}\right|_{{t}=0}{ ,}
\end{align*} 
  Let $m_{{t}}(X,V)=\mu_{t}(1,X,V)-\mu_{t}(0,X,V)$. From Theorem \ref{THEOREM:IDENTIFIABLE}(b) and the chain rule, we have  that
\begin{align} 
 {\left.\frac\partial{\partial{t}}\tau_{t}\right|}_{{t}=0} &\notag={\left.\frac\partial{\partial{t}}\E_{t}\left\{\frac{Rm_{t}(X,V)}{{\rho}_{t}(X)}\right\}\right|}_{{t}=0}\\\;\;\;\;\;\;&\notag={\left.\frac\partial{\partial{t}}\E_{t}\left\{\frac{Rm(X,V)}{{\rho}(X)}\right\}\right|}_{{t}=0}+{\left.\E\left\{\frac{{{R}}\frac\partial{\partial{t}}m_{t}(X,V){\rho}_{t}(X)-{{R}}{m_{t}(X,V)}\frac\partial{\partial{t}}{\rho}_{t}(X)}{\left\{{\rho}_{t}(X)\right\}^2}\right\}\right|}_{{t}=0}\\&= {\left.\frac\partial{\partial{t}}\E_{t}\left\{\frac{Rm(X,V)}{{\rho}(X)}\right\}\right|}_{{t}=0}\label{eq: S1-term} \\&~~~+ \E\left\{\frac R{{\rho}(X)}{\left.\frac\partial{\partial{t}}m_{t}(X,V)\right|}_{{t}=0}\right\}\label{eq: S2-term}  \\&~~~- \E\left\{\frac{Rm(X,V)}{\left\{{\rho}(X)\right\}^2}{\left.\frac\partial{\partial{t}}{\rho}_{t}(X)\right|}_{{t}=0}\right\} \label{eq: S3-term} {{.}}
\end{align} 
\subsection{Calculation of \eqref{eq: S1-term}}
\label{ssec:cal-S1}
\begin{align*}
    {\left.\frac\partial{\partial{t}}\E_{t}\left\{\frac{Rm(X,V)}{{\rho}(X)}\right\}\right|}_{{t}=0}& =\E\left\{\frac{Rm(X,V)}{{\rho}(X)}S_\full(X,V,R)\right\}\\& =\E\left\{\frac{Rm(X,V)}{{\rho}(X)}S_\full(X,V,R) \right\}+\E\left\{\frac{Rm(X,V)}{{\rho}(X)}S_\full(Z,Y\mid  X,V,R)\right\} \\& =\E\left\{\frac{Rm(X,V)}{{\rho}(X)}S_\full(X,V,R,Z,Y))\right\}\\& =\E\left\{\frac{Rm(X,V)}{{\rho}(X)}S_\obs(X,V,R,Z,Y))\right\}. 
\end{align*} 
\subsection{Calculation of \eqref{eq: S2-term}}\label{ssec:cal-S2}
\begin{align*}
&{\left.\frac\partial{\partial{t}}\E_{t}(Y\mid Z=z,X,V,R)\right|}_{{t}=0}\\&~~~ =\E\left\{YS_\full(Y\mid Z=z,X,V,R)\mid Z=z,X,V,R\right\}\\&~~~ =\E\left\{YS_\full(Y\mid Z=z,X,V,R)\mid Z=z,X,V,R\right\}\\&\;\;\;\;\;\;\;\;\;\;\;-\E(Y\mid Z=z,X,V,R)\underbrace{\E\left\{S_\full(Y\mid Z=z,X,V,R)\mid Z=z,X,V,R\right\}}_0\\&~~~ =\E\left\{YS_\full(Y\mid Z=z,X,V,R)\mid Z=z,X,V,R\right\}\\&\;\;\;\;\;\;\;\;\;\;\;-\E\left\{\E(Y\mid Z=z,X,V,R)S_\full(Y\mid Z=z,X,V,R)\mid Z=z,X,V,R\right\}\\&~~~ =\E\left[\left\{Y-\E(Y\mid Z=z,X,V,R)\right\}S_\full(Y\mid Z=z,X,V,R)\mid Z=z,X,V,R\right]\\&\;\;\;\;\;\;\;\;\;+\E\left[\left\{Y-\E(Y\mid Z=z,X,V,R)\right\}S_\full(Z=z,X,V,R)\mid Z=z,X,V,R\right] \\&~~~ =\E\left[\left\{Y-\E(Y\mid Z=z,X,V,R)\right\}S_\full(Z=z,X,V,R,Y)\mid Z=z,X,V,R\right] {.}
\end{align*}
Hence,
\begin{align*}&{\left.\frac\partial{\partial t}m_t^{}(X)\right|}_{\;t=0}\\&={\left.\frac\partial{\partial t}\E_t(Y\mid Z=1,X,V,R)\right|}_{t=0}-{\left.\frac\partial{\partial t}\E_t(Y\mid Z=0,X,V,R)\right|}_{t=0}\\&=\E\left[\left\{Y-\mu(1,X,V)\right\}S_\full(Y,R,Z=1,X,V)\mid Z=1,X,V,R\right]\\&\;\;\;\;\;-\E\left[\left\{Y-\mu(0,X,V)\right\}S_\full(Y,R,Z=0,X,V)\mid Z=0,X,V,R\right]\\&=\E\left[\left.\frac Z{\pi(X,V)}\left\{Y-\mu(1,X,V)\right\}S_\full(Z,X,V,Y,R)\right|X,V,R\right]\\&\;\;\;\;-\E\left[\left.\frac{1-Z}{1-\pi(X,V)}\left\{Y-\mu(0,X,V)\right\}S_\full(Z,X,V,Y,R)\right|X,V,R\right] {.}
\end{align*} 
We then have,
\begin{align*}&\E\left\{\frac R{{{\rho}}(X)}{\left.\frac\partial{\partial{t}}m_{t}(X,V)\right|}_{{t}=0}\right\}\\&=\E\left(\frac R{\rho(X)}\E\left[\left.\frac Z{\pi(X,V)}\left\{Y-\mu(1,X,V)\right\}S_\full(Z,X,V,Y,R)\right|X,V,R\right]\right)\\&\;\;\;-\E\left(\frac R{\rho(X)}\E\left[\left.\frac{1-Z}{1-\pi(X,V)}\left\{Y-\mu(0,X,V)\right\}S_\full(Z,X,V,Y,R)\right|X,V,R\right]\right)\\&=\E\left(\E\left[\frac R{\rho(X)}\left.\frac Z{\pi(X,V)}\left\{Y-\mu(1,X,V)\right\}S_\full(Z,X,V,Y,R)\right|X,V,R\right]\right)\\&\;\;\;\;\;-\E\left(\E\left[\left.\frac R{\rho(X)}\frac{1-Z}{1-\pi(X,V)}\left\{Y-\mu(0,X,V)\right\}S_\full(Z,X,V,Y,R)\right|X,V,R\right]\right)\\&=\E\left[\frac R{\rho(X)}\frac Z{\pi(X,V)}\left\{Y-\mu(1,X,V)\right\}S_\full(Z,X,V,Y,R)\right]\\&\;\;\;\;\;-\E\left[\frac R{\rho(X)}\frac{1-Z}{1-\pi(X,V)}\left\{Y-\mu(0,X,V)\right\}S_\full(Z,X,V,Y,R)\right]\\&=\E\left[\frac R{\rho(X)}\frac Z{\pi(X,V)}\left\{Y-\mu(1,X,V)\right\}S_\obs(Z,X,V,Y,R)\right]\\&\;\;\;\;\;-\E\left[\frac R{\rho(X)}\frac{1-Z}{1-\pi(X,V)}\left\{Y-\mu(0,X,V)\right\}S_\obs(Z,X,V,Y,R)\right] {.}
\end{align*}

\subsection{Calculation of \eqref{eq: S3-term}}\label{ssec:cal-S3}
\begin{align*}
{\left.\frac\partial{\partial{t}}{\rho}_{t}(X)\right|}_{\;{t}=0}&={\left.\frac\partial{\partial{t}}\mathrm{pr}(R=1\mid X ;{t})\right|}_{{t}=0}\\&={\left.\frac{\frac{\partial}{\partial{t}}\mathrm{pr}(R=1\mid X ;{t})}{\mathrm{pr}(R=1\mid X ;{t})}\mathrm{pr}(R=1\mid X ;{t})\right|}_{{t}=0}\\&=S_\full(R=1\mid X )\mathrm{pr}(R=1\mid X) \\&=S_\full(R=1\mid X ){\rho}( X) {,}~~~~~~~~~~~~~~~~~~~~~~~~~~~~~~~~~~~~~~~~~~~~~~~~~~~~~~~~~~~~~~~~~~~~~~~~~~~~~~~~~~~~~~~~~~~~~~~~~~~~~~~~~~~~~~~~~~~~~~~~~~~~~~~~~~~~
\end{align*}
\begin{align*} {\E}&\left\{\frac{Rm(X,V)}{\left\{\rho(X)\right\}^2}{\left.\frac\partial{\partial t}\rho_t(X)\right|}_{t=0}\right\}\\&={\E}\left\{\frac{Rm(X,V)S_{\full}(R=1\mid X)\rho(X)}{\left\{\rho(X)\right\}^2}\right\}\\&={\E}\left\{\frac{Rm(X,V)S_{\full}(R=1\mid X)}{\rho(X)}\right\}\\&={\E}\left\{\frac R{\rho(X)}{\E}\left\{m(X,V)\mid X\right\}S_{\full}(R\mid X)\right\}\\&={\E}\left[\frac{R{\E}\left\{m(X,V)\mid X\right\}S_{\full}(R\mid X)}{\rho(X)}\right]-\underbrace{{\E}\left[{\E}\left\{m(X,V)\mid X\right\}S_{\full}(R\mid X)\right]}_0\\&={\E}\left\{\frac{\left\{R-\rho(X)\right\}{\E}\left\{m(X,V)\mid X\right\}S_{\full}(R\mid X)}{\rho(X)}\right\}\\&={\E}\left\{\frac{\left\{R-\rho(X)\right\}{\E}\left\{m(X,V)\mid X\right\}\left[S_{\full}(R\mid X)+S_{\full}(X)+S_{\full}(V\mid X)\right]}{\rho(X)}\right\}\\&\;\;\;\;\;\;+{\E}\left\{\frac{\left\{R-\rho(X)\right\}{\E}\left\{m(X,V)\mid X\right\}\left[S_{\full}(Z\mid X,V)+S_{\full}(Y\mid Z,X,V)\right]}{\rho(X)}\right\}\\&={\E}\left[\frac{\left\{R-\rho(X)\right\}{\E}\left\{m(X,V)\mid X\right\}S_{\full}(Z,X,V,Y,R)}{\rho(X)}\right]\\&={\E}\left[\frac{\left\{R-\rho(X)\right\}{\E}\left\{m(X,V)\mid X\right\}S_{\obs}(Z,X,V,Y,R)}{\rho(X)}\right],
\end{align*}  
where the final equality holds due to the fact
\begin{align*}
    \E&\left[\frac{\left\{R-\rho(X)\right\}\E\left\{m(X,V)\mid X\right\}S_{\full}(Z,X,V,Y,R)}{\rho(X)}\right]\\&=\E\left[\frac{R\E\left\{m(X,V)\mid X\right\}S_{\obs}(Z,X,V,Y,R)}{\rho(X)}\right]-\E\left[\frac{  \rho(X) \E\left\{m(X,V)\mid X\right\}S_{\full}(Z,X,V,Y,R)}{\rho(X)}\right]\\&=\E\left[\frac{R\E\left\{m(X,V)\mid X\right\}S_{\obs}(Z,X,V,Y,R)}{\rho(X)}\right]-\E\left[\frac{  \rho(X) \E\left\{m(X,V)\mid X\right\}S_{\full}(X)}{\rho(X)}\right]\\&=\E\left[\frac{R\E\left\{m(X,V)\mid X\right\}S_{\obs}(Z,X,V,Y,R)}{\rho(X)}\right]-\E\left[\frac{  \rho(X) \E\left\{m(X,V)\mid X\right\}S_{\obs}(X)}{\rho(X)}\right] \\&=\E\left[\frac{\left\{R-\rho(X)\right\}\E\left\{m(X,V)\mid X\right\}S_{\obs}(Z,X,V,Y,R)}{\rho(X)}\right].
\end{align*}
\subsection{Canonical gradient}
In combination with Section \ref{ssec:cal-S1}-\ref{ssec:cal-S3}, 
we have  $\E[\{\varphi(O)-\tau\}S_{\obs}(Z,X,V,Y,R)]=0$, where 
\begin{align*} \varphi(O)&=\frac R{\rho(X)}\left[\frac Z{\pi(X,V)}\left\{Y-\mu(1,X,V)\right\}-\frac{1-Z}{1-\pi(X,V)}\left\{Y-\mu(0,X,V)\right\}+m(X,V)\right]\\&~~~~-\frac{\left\{R-\rho(X)\right\}\E\left\{m(X,V)\mid X\right\}}{\rho(X)} -\tau { .}
\end{align*}
  Because $\varphi(O)-\tau$ lies in the tangent space, it is the EIF for $\tau$. 
 \end{proof}

 \section{The proof of Theorem \ref{THM: MUL-ROBUSTNESS}}
 \subsection{Proof of the triple robustness}
 In this section, let $h(X)=\E\{m(X,V)\mid X\}$, where $m(X,V)=\mu(1,X,V)-\mu(0,X,V)$.  
 
First, suppose that $\pi(X,V)$ and ${\rho}( X)$ are correctly specified; 
{ $m^\ast(X,V)$} and $f^\ast(V\mid X)$ are misspecified parametric models, and
\begin{align*} &\E\left(\frac R{{\rho}(X)}\left[\frac Z{\pi(X,V)}\left\{Y-\mu^\ast(1,X,V)\right\}-\frac{1-Z}{1-\pi(X,V)}\left\{Y-\mu^\ast(0,X,V)\right\}\right]\right)\\&\;\;\;\;\;\;+\E\left[\frac{R\left\{{ m^\ast(X,V)}-h^\ast(X)\right\}}{{\rho}(X)}+h^\ast(X)\right]\\& =\E\left(\E\left[\left.\left\{\mu(1,X,V)-\mu^\ast(1,X,V)\right\}-\left\{\mu(0,X,V)-\mu^\ast(1,X,V)\right\}\right|R=1,X\right]\right)\\&\;\;\;\;\;\;+\E\left[\frac{R\left\{{ m^\ast(X,V)}-h^\ast(X)\right\}}{{\rho}(X)}+h^\ast(X)\right]\\& =\E\left(\E\left[\left.\left\{\mu(1,X,V)-\mu^\ast(1,X,V)\right\}-\left\{\mu(0,X,V)-\mu^\ast(1,X,V)\right\}\right|R=1,X\right]\right)\\&\;\;\;\;\;\;+\E\left[\frac{R\left\{{ m^\ast(X,V)}-h^\ast(X)\right\}}{{\rho}(X)}+h^\ast(X)\right]\\& =\E\left(\E\left\{\mu(1,X,V)-\mu(0,X,V)\mid  R=1,X\right\}\right) -\E\left(\E\left\{{ m^\ast(X,V)}\mid  R=1,X\right\}\right)\\&\;\;\;\;\;\;+\E\left\{\frac{R{ m^\ast(X,V)}}{{\rho}(X)}\right\}\\&=\tau. 
\end{align*}
Second, suppose that $\mu(Z,X,V)$  and ${\rho}(X)$ are correctly specified; 
$f^\ast(V\mid X)$ and $\pi^\ast(X,V)$ are misspecified parametric models. Thus  $m(X,V)=\mu(1, X,V)-\mu(0, X,V)$ is correctly specified, and
\begin{align*}
 \E&\left(\frac R{{\rho}(X)}\left[\frac Z{\pi^\ast(X,V)}\left\{Y-\mu(1,X,V)\right\}-\frac{1-Z}{1-\pi^\ast(X,V)}\left\{Y-\mu(0,X,V)\right\}\right]\right.\\&\;\;\;\;\;\;\;\;\;\left.+\frac R{{\rho}(X)}\left\{m(X,V)-h^\ast(X)\right\}+h^\ast(X) \right)\\& =\E\left(\frac{{\rho}(X)}{{\rho}(X)}\left[\frac{\pi(X,V)}{\pi^\ast(X,V)}\left\{\mu(1,X,V)-\mu(1,X,V)\right\}+\frac{1-\pi(X,V)}{1-\pi^\ast(X,V)}\left\{\mu(0,X,V)-\mu(0,X,V)\right\}\right]\right.\\&\;\;\;\;\;\;\;\;\;\left.+\frac R{{\rho}(X)}\left\{m(X,V)-h^\ast(X)\right\}+h^\ast(X) \right)\\& =\E\left\{\frac{Rm(X,V)}{{\rho}(X)}\right\}\\&=\tau. 
\end{align*}
Third, suppose that $\mu(Z,X,V)$ and $f(V\mid X)$ are correctly specified; 
{ $\rho^\ast(X)$} and $\pi^\ast(X)$ are misspecified parametric models. Thus  $m(X,V)=\mu(1, X,V)-\mu(0, X,V)$ is correctly specified, and
\begin{align*}
 &\E\left(\frac R{{{ \rho}}^\ast(X)}\left[ \frac Z{\pi^\ast(X,V)}\left\{Y-\mu(1,X,V)\right\}-\frac{1-Z}{1-\pi^\ast(X,V)}\left\{Y-\mu(0,X,V)\right\}\right]\right.\\&\;\;\;\;\;\;\;\;\;\left.+\frac R{{{ \rho}}^\ast(X)}\left\{m(X,V)-h(X)\right\}+h(X)\right)\\& =\E\left(\frac{{\rho}(X)}{{{ \rho}}^\ast(X)}\left[\frac{\pi(X,V)}{\pi^\ast(X,V)}\left\{\mu(1,X,V)-\mu(1,X,V)\right\}+\frac{1-\pi(X,V)}{1-\pi^\ast(X,V)}\left\{\mu(0,X,V)-\mu(0,X,V)\right\}\right]\right)\\&\;\;\;\;\;+\E\left[\frac{{\rho}(X)\left\{m(X,V)-h(X)\right\}}{{{ \rho}}^\ast(X)}\right]+\E\left\{h(X)\right\}\\&=0+\E\left[\frac{{\rho}(X)\E\left\{m(X,V)-h(X)\mid X\right\}}{{{ \rho}}^\ast(X)}\right]+\E\left\{h(X)\right\}\\&=\tau. 
\end{align*} 

 \subsection{Proof of the semiparamteric efficiency} 
 Under the intersection submodel $\cap_{j=1}^3\mathcal{M}_j$,   all nuisance models are correctly specified. 
 Therefore, by the Central Limit Theorem,  we know that $\hat{\tau}_\MR$ achieves the local efficiency.
 
\section{The proof of Theorem \ref{THM: NON-MARCHINE-LEARNING}}
 {We show that $\widehat{\tau}_{\MR}$ using $\{\rho(X;\widehat{\alpha}), \pi(X,V;\widehat\beta),\mu(Z,X,V;\widehat \gamma),f(V\mid X;\widehat \theta)\}$ satisfying the regularity conditions (a)-(d) in Theorem  \ref{THM: NON-MARCHINE-LEARNING} is asymptotic normality and has the influence function ${\psi_{\mathrm{eff}}^\tau}$ as in Theorem \ref{THM: EIF}. Therefore,   it can achieve the semiparametric efficiency. 
 For simplicity, let $\{\widehat \rho(X), \widehat \pi(X,V),\widehat \mu(Z,X,V ),\widehat f(V\mid X )\}$ denote the estimated nuisance functions  $\{\rho(X;\widehat \alpha), \pi(X,V;\widehat \beta),\mu(Z,X,V ;\widehat \gamma),f(V\mid X ;\widehat \theta)\}$. Let $(\alpha^\star,\beta^\star,\gamma^\star,\theta^\star)$ denote the   probability limit of  $( \widehat\alpha, \widehat\beta, \widehat\gamma, \widehat\theta)$. Let $(\alpha^\ast,\beta^\ast,\gamma^\ast,\theta^\ast)$ denote the   true value,  and we have $\{ \rho(X),  \pi(X,V), \mu(Z,X,V ), f(V\mid X )\}=\{\rho(X; \alpha^\ast), \pi(X,V; \beta^\ast),\mu(Z,X,V ; \gamma^\ast),f(V\mid X ; \theta^\ast)\}$.    We use $\mathbb{P}_n$ to denote the empirical measure so that sample averages are written as $\mathbb{P}_n(f)=\mathbb{P}_n\{f(U)\}=\frac{1}{n} \sum_i f\left(U_i\right)$ for some generic variable $U$. For a possibly random function $\widehat{f}$,  we similarly write $\mathbb{P}(\widehat{f})=\mathbb{P}\{\widehat{f}(U)\}=\int \widehat{f}(u) d \mathbb{P}(u)$, and   $\mathbb{P}_{\widehat \theta}(\widehat{f})=\mathbb{P}\{\widehat{f}(U)\}=\int \widehat{f}(u) d \mathbb{P}(u;\hat{\theta})$. 
 In the following proof, we will use the following identities repeatedly.   
{ \begin{align*}  
   \mathbb{P}_{\widehat{\theta}}\left\{\widehat{\mu}(1,X,V)\mid   X\right\}& =\int{\widehat{\mu}(1,X,V)}\widehat{f}\left(V\mid X\right)d\mu\left(V\right)\\&=\int\frac{\widehat{f}\left(V\mid X\right)}{f\left(V\mid R=1,X\right)}{\widehat{\mu}(1,X,V)}f\left(V\mid R=1,X\right)d\mu\left(V\right)\\& ={ \P }\left\{\left.\frac{\widehat{f}\left(V\mid X \right)}{f\left(V\mid X \right)}{\widehat{\mu}(1,X,V)}\right|R=1,X\right\}, \\
\P\left[\P_{\widehat{\theta}}\left\{\widehat{\mu}(1,X,V)\mid X\right\}\right]& =\mathbb{P}\left[{ \P }\left\{\left.\frac{\widehat{f}(V\mid X )}{f (V\mid X )}{\widehat{\mu}(1,X,V)}\right|R=1,X\right\}\right]\\& =\P\left\{\frac R{{\rho}(X)}\frac{\widehat{f}(V\mid X )}{f (V\mid X )}{\widehat{\mu}(1,X,V)}\right\}.\\&\\\mathbb{P}\left[\frac R{\widehat{\rho}(X)}\P_{\widehat{\theta}}\left\{\widehat{\mu}(1,X,V)\mid X\right\}\right]& =\mathbb{P}\left[{ \P }\left\{\left.\frac R{\widehat{\rho}(X)}\frac{\widehat{f}(V\mid X )}{f (V\mid X )}{\widehat{\mu}(1,X,V)}\right|R=1,X\right\}\right]\\& =\mathbb{P}\left\{\frac R{\widehat{\rho}(X)}\frac{\widehat{f}(V\mid X )}{f (V\mid X )}{\widehat{\mu}(1,X,V)}\right\}.
\end{align*}
 Next, we { prove} Theorem \ref{THM: NON-MARCHINE-LEARNING}.}
	\begin{proof}
 Let $\mathbb{P}$ denote the true distribution, $\mathbb{P}_n$ denote the sample averages,  and
 \begin{equation}
     \label{eq: mr-estimator-1}
     \begin{aligned}
 \tau_{\MR ,0}(\alpha,\beta,\gamma,\theta)&=  \frac R{\rho(X;{\alpha})}\left[\frac Z{\pi(X,V;\beta)}\left\{Y-\mu(1,X,V;\gamma)\right\}+\mu(1,X,V;\gamma) \right]  \\&\;\;\;
+\frac {\rho(X;{\alpha})-R}{\rho(X;{\alpha})}  \E_\theta\{\mu(1,X,V;\gamma)\mid X\}   ,
\end{aligned}
 \end{equation} 
 \begin{equation}
     \label{eq: mr-estimator-0}
     \begin{aligned}
\tau_{\MR ,0}(\alpha,\beta,\gamma,\theta)&=   \frac R{\rho(X;{\alpha})}\left[\frac{1-Z}{1-\pi(X,V;\beta)}\left\{Y-\mu(0,X,V;\gamma)\right\}+\mu(0,X,V;\gamma ) \right] \\&\;\;\;
+\frac {\rho(X;{\alpha})-R}{\rho(X;{\alpha})} \E_\theta\{\mu(0,X,V;\gamma)\mid X\}   ,
\end{aligned}
 \end{equation}
 and $\tau_{\MR }(\alpha,\beta,\gamma,\theta)=\tau_{\MR ,1}(\alpha,\beta,\gamma,\theta)-\tau_{\MR,0 }(\alpha,\beta,\gamma,\theta)$. 
{Condition (a) implies that $ (\alpha^\star,\beta^\star,\gamma^\star,\theta^\star)\\=(\alpha^\ast,\beta^\ast,\gamma^\ast,\theta^\ast)$   and thus $\tau_{\MR}(\alpha^\star,\beta^\star,\gamma^\star,\theta^\star)=\tau_{\MR}(\alpha^\ast,\beta^\ast,\gamma^\ast,\theta^\ast)$.

Note that
\begin{align}
\mathbb{P}_n &\notag\{\tau_{\MR,1}( \widehat\alpha, \widehat\beta, \widehat\gamma, \widehat\theta)\}- \tau_1  \\&\notag=\left(\mathbb{P}_n-\mathbb{P}\right)\{\tau_{\MR,1}( \widehat\alpha, \widehat\beta, \widehat\gamma, \widehat\theta)\}+\left[\mathbb{P}\left\{\tau_{\MR,1}( \widehat\alpha, \widehat\beta, \widehat\gamma, \widehat\theta)\right\}- \tau_1 \right] \\
&\notag=\left(\mathbb{P}_n-\mathbb{P}\right)\{\tau_{\MR,1}( \alpha^\ast,\beta^\ast,\gamma^\ast,\theta^\ast)\}+\left[\mathbb{P}
\left\{\tau_{\MR,1}( \widehat\alpha, \widehat\beta, \widehat\gamma, \widehat\theta)\right\}- \tau_1 \right]\\
&\notag~~~~+\left(\mathbb{P}_n-\mathbb{P}\right)\big\{\tau_{\MR,1}( \widehat\alpha, \widehat\beta, \widehat\gamma, \widehat\theta)-\tau_{\MR,1}( \alpha^\ast,\beta^\ast,\gamma^\ast,\theta^\ast)\big\}\\
& =\left(\mathbb{P}_n-\mathbb{P}\right)\{\tau_{\MR,1}( \alpha^\ast,\beta^\ast,\gamma^\ast,\theta^\ast)\}+\left[\mathbb{P}
\left\{\tau_{\MR,1}( \widehat\alpha, \widehat\beta, \widehat\gamma, \widehat\theta)\right\}- \tau_1 \right]+o_p \left(n^{-1 / 2}\right), 
&\label{eq:normality} 
\end{align}
where the third equality follows by Condition (b) and the empirical process theory \citep{van2000asymptotic}.  The first term
\begin{align*}
 \left(\mathbb{P}_n-\mathbb{P}\right)\{\tau_{\MR,1}( \alpha^\ast,\beta^\ast,\gamma^\ast,\theta^\ast)\}
\end{align*}
is a simple sample average of a fixed function, and so by the central limit theorem,  it behaves as a normally distributed random variable with variance $\operatorname{var}(\varphi) / n$, up to error $o_p(n^{-1/2})$.   It remains to analyze the second term,}	 
\begin{align*} 
&{\mathbb{P}}\left\{\tau_{\MR,1}( \widehat\alpha, \widehat\beta, \widehat\gamma, \widehat\theta)- \tau_1  \right\}\\&  =\mathbb{P}\left[\frac R{\widehat{\rho}(X)}\frac Z{\widehat{\pi}(X,V)}\left\{Y-\widehat{\mu}(1,X,V)\right\}+\frac R{\widehat{\rho}(X)}\left\{\widehat{\mu}(1,X,V)-\frac{\widehat{f} (V\mid X )}{f (V\mid X )}\widehat{\mu}(1,X,V)\right\}\right.\\&\left.\;\;\;\;\;\;\;\;\;\;\;+\frac R{{\rho}(X)}\frac{\widehat{f} (V\mid X )}{f (V\mid X )}\widehat{\mu}(1,X,V)\right]-\mathbb{P}\left\{\frac R{{\rho}(X)}\mu(1,X,V)\right\}\\& =\mathbb{P}\left[\frac R{\widehat{\rho}(X)}\frac Z{\widehat{\pi}(X,V)}\left\{Y-\widehat{\mu}(1,X,V)\right\}+\frac R{\widehat{\rho}(X)}\left\{\widehat{\mu}(1,X,V)-\frac{\widehat{f} (V\mid X )}{f (V\mid X )}\widehat{\mu}(1,X,V)\right\}\right]\\&\;\;\;\;\;\;\;\;\;\;\;\;+\mathbb{P}\left[\frac R{{\rho}(X)}\left\{\frac{\widehat{f} (V\mid X )}{f (V\mid X )}\widehat{\mu}(1,X,V)-\mu\left(1,X,V\right)\right\}\right]\\& =\mathbb{P}\left[\frac R{\widehat{\rho}(X)}\left\{\frac{ZY}{\widehat{\pi}(X,V)}-\frac{Z\widehat{\mu}(1,X,V)}{\widehat{\pi}(X,V)}+\widehat{\mu}(1,X,V)-\frac{\widehat{f} (V\mid X )}{f (V\mid X )}\widehat{\mu}\left(1,X,V\right)\right\}\right]\\&\;\;\;\;\;\;\;\;\;\;\;\;+\mathbb{P}\left[\frac R{{\rho}(X)}\left\{\frac{\widehat{f} (V\mid X )}{f (V\mid X )}\widehat{\mu}(1,X,V)-\mu\left(1,X,V\right)\right\}\right]\\& =\mathbb{P}\left[\frac R{\widehat{\rho}(X)}\left\{\frac{ZY}{\widehat{\pi}(X,V)}-\frac{Z\widehat{\mu}(1,X,V)}{\widehat{\pi}(X,V)}+\widehat{\mu}(1,X,V)-\frac{\widehat{f} (V\mid X )}{f (V\mid X )}\widehat{\mu}(1,X,V)\right\}\right]\\&\;\;\;\;\;\;\;\;\;\;\;\;-\mathbb{P}\left[\frac{R}{\rho(X)}\left\{\frac{ZY}{\widehat{\pi}(X,V)}-\frac{Z\widehat{\mu}(1,X,V)}{\widehat{\pi}(X,V)}+\widehat{\mu}(1,X,V)-\frac{\widehat{f} (V\mid X )}{f (V\mid X )}\widehat{\mu}(1,X,V)\right\}\right]\\&\;\;\;\;\;\;\;\;\;\;\;\;+\mathbb{P}\left[\frac{R}{\rho(X)}\left\{\frac{ZY}{\widehat{\pi}(X,V)}-\frac{Z\widehat{\mu}(1,X,V)}{\widehat{\pi}(X,V)}+\widehat{\mu}(1,X,V)-\frac{\widehat{f} (V\mid X )}{f (V\mid X )}\widehat{\mu}(1,X,V)\right\}\right]\\&\;\;\;\;\;\;\;\;\;\;\;\;+\mathbb{P}\left[\frac R{{\rho}(X)}\left\{\frac{\widehat{f} (V\mid X )}{f (V\mid X )}\widehat{\mu}(1,X,V)-\mu(1,X,V)\right\}\right]\\&  =\mathbb{P}\left[\frac R{\widehat{\rho}(X)}\left\{\frac{ZY}{\widehat{\pi}(X,V)}-\frac{Z\widehat{\mu}(1,X,V)}{\widehat{\pi}(X,V)}+\widehat{\mu}(1,X,V)-\frac{\widehat{f} (V\mid X )}{f (V\mid X )}\widehat{\mu}(1,X,V)\right\}\right]\\&\;\;\;\;\;\;\;\;\;\;\;\;-\mathbb{P}\left[\frac{R}{\rho(X)}\left\{\frac{ZY}{\widehat{\pi}(X,V)}-\frac{Z\widehat{\mu}(1,X,V)}{\widehat{\pi}(X,V)}+\widehat{\mu}(1,X,V)-\frac{\widehat{f} (V\mid X )}{f (V\mid X )}\widehat{\mu}(1,X,V)\right\}\right]\\&\;\;\;\;\;\;\;\;\;\;\;\;+\mathbb{P}\left[\frac{R}{\rho(X)}\left\{\frac{ZY}{\widehat{\pi}(X,V)}-\frac{Z\widehat{\mu}(1,X,V)}{\widehat{\pi}(X,V)}+\widehat{\mu}(1,X,V)-\mu(1,X,V)\right\}\right] \\& =\mathbb{P}\left[\left\{\frac{R}{{ \widehat{\rho}(X)}}-\frac{R}{{\rho(X)}}\right\}\begin{Bmatrix}
  \dfrac{\pi(X,V)\mu(1,X,V)}{\widehat{\pi}(X,V)}-\dfrac{\pi(X,V)\widehat{\mu}(1,X,V)}{\widehat{\pi}(X,V)}\\\addlinespace[1.5mm]+\widehat{\mu}(1,X,V)-\dfrac{\widehat{f} (V\mid X )}{f (V\mid X )}\widehat{\mu}(1,X,V)
\end{Bmatrix} \right]\\&\;\;\;\;\;\;\;\;\;\;\;\;+\mathbb{P}\left[\frac{R}{\rho(X)}\left\{\dfrac{\pi(X,V)\mu\left(1,X,V\right)}{\widehat{\pi}(X,V)}-\dfrac{\pi(X,V)\widehat{\mu}(1,X,V)}{\widehat{\pi}(X,V)}+\widehat{\mu}\left(1,X,V\right)-\mu\left(1,X,V\right)\right\}\right]\\&  =\mathbb{P}\left[\left\{\frac{R}{{ \widehat{\rho}(X)}}-\frac{R}{{\rho(X)}}\right\}\left\{\frac{\pi(X,V)\mu(1,X,V)}{\widehat{\pi}(X,V)}-\frac{\pi(X,V)\widehat{\mu}(1,X,V)}{\widehat{\pi}(X,V)} \right\}\right]\\& \;\;\;\;\;\;\;\;\;\;\;\;+\mathbb{P}\left[\left\{\frac{R}{{ \widehat{\rho}(X)}}-\frac{R}{{\rho(X)}}\right\}\left\{\widehat{\mu}(1,X,V)-\dfrac{\widehat{f} (V\mid X )}{f (V\mid X )}\widehat{\mu}(1,X,V)\right\}\right] \\&\;\;\;\;\;\;\;\;\;\;\;\;+\mathbb{P}\left[\frac{R}{\rho(X)}\left\{\frac{\pi(X,V)}{\widehat{\pi}(X,V)}\left\{\mu\left(1,X,V\right)-\widehat{\mu}(1,X,V)\right\}+\widehat{\mu}\left(1,X,V\right)-\mu\left(1,X,V\right)\right\}\right]\\& =\mathbb{P}\left[\left\{\frac{R}{{ \widehat{\rho}(X)}}-\frac{R}{{\rho(X)}}\right\}\left\{\frac{\pi(X,V)\mu(1,X,V)}{\widehat{\pi}(X,V)}-\frac{\pi(X,V)\widehat{\mu}(1,X,V)}{\widehat{\pi}(X,V)}\right\}\right]\\&\;\;\;\;\;\;\;\;\;\;\;\;+\mathbb{P}\left[\left\{\frac{R}{{ \widehat{\rho}(X)}}-\frac{R}{{\rho(X)}}\right\}\left\{\widehat{\mu}\left(1,X,V\right)-\frac{\widehat{f} (V\mid X )}{f (V\mid X )}\widehat{\mu}\left(1,X,V\right)\right\}\right]\\&\;\;\;\;\;\;\;\;\;\;\;\;+\mathbb{P}\left\{\frac{R}{{\rho}(X)}\left\{\frac{\pi(X,V)}{\widehat{\pi}(X,V)}-1\right\}\left\{\mu\left(1,X,V\right)-\widehat{\mu}(1,X,V)\right\}\right\}\\& =\mathbb{P}\left[\frac{\pi(X,V)R}{\widehat{\pi}(X,V) }\left\{\frac{1}{{ \widehat{\rho}(X)}}-\frac{1}{{{\rho}(X)}}\right\}\left\{\mu(1,X,V)-\widehat{\mu}(1,X,V)\right\}\right]\\&\;\;\;\;\;\;\;\;\;\;\;\;+\mathbb{P}\left[\frac{\widehat{\mu}\left(1,X,V\right)R}{f (V\mid X ) }\left\{\frac{1}{{ \widehat{\rho}(X)}}-\frac{1}{{{\rho}(X)}}\right\}\left\{f (V\mid X )-\widehat{f} (V\mid X )\right\}\right]\\&\;\;\;\;\;\;\;\;\;\;\;\;+\mathbb{P}\left[\frac R{\widehat{\pi}(X,V)\rho(X)}\left\{\pi(X,V)-\widehat{\pi}(X,V)\right\}\left\{\mu\left(1,X,V\right)-\widehat{\mu}(1,X,V)\right\}\right]\\& =\mathbb{P}\left[\frac{\pi(X,V)R}{\widehat{\pi}(X,V) {{\rho}(X)}{\widehat{\rho}(X)}}\left\{ {{{\rho}(X)}}-{\widehat{{\rho}}(X)}
\right\}\left\{\mu(1,X,V)-\widehat{\mu}(1,X,V)\right\}\right]\\&\;\;\;\;\;\;\;\;\;\;\;\;+\mathbb{P}\left[\frac{\widehat{\mu}\left(1,X,V\right)R}{f (V\mid X ) {{\rho}(X)}{\widehat{\rho}(X)}}\left\{ {{{\rho}(X)}}-{\widehat{{\rho}}(X)}
\right\}\left\{f (V\mid X )-\widehat{f} (V\mid X )\right\}\right]\\&\;\;\;\;\;\;\;\;\;\;\;\;+\mathbb{P}\left[\frac R{\widehat{\pi}(X,V)\rho(X)}\left\{\pi(X,V)-\widehat{\pi}(X,V)\right\}\left\{\mu\left(1,X,V\right)-\widehat{\mu}(1,X,V)\right\}\right].
\end{align*}
Similarly,
\begin{align*}
   &{ \P}\left\{\tau_{\MR,0}( \widehat\alpha, \widehat\beta, \widehat\gamma, \widehat\theta)-\tau_{ 0} \right\}\\& =\mathbb{P}\left[\frac{\{1-\pi(X,V)\}R}{\{1-\widehat{\pi}(X,V) \}{{\rho}(X)}{\widehat{\rho}(X)}}\left\{ {{{\rho}(X)}}-{\widehat{{\rho}}(X)}
\right\}\left\{\mu(0,X,V)-\widehat{\mu}(0,X,V)\right\}\right]\\&\;\;\;\;\;\;\;\;\;\;\;\;+\mathbb{P}\left[\frac{\widehat{\mu}\left(0,X,V\right)R}{f (V\mid X ){{\rho}(X)}{\widehat{\rho}(X)}}\left\{ {{{\rho}(X)}}-{\widehat{{\rho}}(X)}
\right\}\left\{f (V\mid X )-\widehat{f} (V\mid X )\right\}\right]\\&\;\;\;\;\;\;\;\;\;\;\;\;-\mathbb{P}\left[\frac R{\{1-\widehat{\pi}(X,V)\}\rho(X)}\left\{\pi(X,V)-\widehat{\pi}(X,V)\right\}\left\{\mu\left(0,X,V\right)-\widehat{\mu}(0,X,V)\right\}\right].
\end{align*}

By the Cauchy–Schwarz inequality and Conditions (a), (c) and (d), it follows that for  a  constant $C$, we have \begin{align*} \left|\mathbb{P}\left\{\tau_{\MR}(\widehat\alpha,\widehat\beta,\widehat\gamma,\widehat\theta)\right\}-\tau \right| &\leq C{\left\|\rho(X)-\widehat \rho(X)\right\|}_2{\left\|m(X,V)-\widehat m(X,V)\right\|}_2\\&\;\;\;\;+C{\left\|\rho(X)-\widehat \rho(X)\right\|}_2{\left\|f(V\mid X)-\widehat f(V\mid X)\right\|}_2\\&\;\;\;\;+C{\left\|\pi(X,V)-\widehat\pi(X,V)\right\|}_2{\left\|m(X,V)-\widehat m(X,V)\right\|}_2\\&\;\;\;\;=o_p(n^{-1/2}).
\end{align*} 
Continuing with \eqref{eq:normality}, we have 
\begin{align*}
\mathbb{P}_n\notag \{\tau_{\MR,1}&( \widehat\alpha, \widehat\beta, \widehat\gamma, \widehat\theta)\}-\tau_1=\left(\mathbb{P}_n-\mathbb{P}\right)\{\tau_{\MR,1}( \alpha^\ast,\beta^\ast,\gamma^\ast,\theta^\ast)\}+ o_p \left(n^{-1 / 2}\right). 
\end{align*} 
This completes the proof.
	\end{proof}
 \section{Derivation of the optimal sample allocation}  
 \subsection{The influence function of $\widehat{\tau}_{\MR} (\widehat{\beta}, \widehat{\gamma}, \widehat{\theta})$}
 For the selection probability, we consider the  estimating equation   for $\alpha$, 
 \begin{align}
\label{eq:rho-ee}
\E\{S_\rho(R, X ; \alpha)\}=0,
\end{align}
where $$ 
{\ S_\rho(R, X ; \alpha)} =\frac{R-\rho(X;\alpha)}{\rho(X;\alpha)\{1-\rho(X;\alpha)\}} \frac{\partial \rho(X;\alpha)}{\partial \alpha}.$$   
 For the propensity score, we consider the estimating equation  for $\beta$,  
 \begin{align}
\label{eq:pi-ee}
\E\{S_\pi(Z, X, V ; \beta)\mid R=1\}=0,
\end{align}
where
 $$ 
{\ S_\pi(Z, X, V ; \beta)} =\frac{Z-\pi(X, V ; \beta)}{\pi(X, V ; \beta)\{1-\pi(X, V ; \beta)\}} \frac{\partial \pi(X, V ; \beta)}{\partial \beta}.$$  
  For the outcome model, we consider the following  estimating function for $\gamma$, 
\begin{align}
\label{eq:mu-ee}
 \E\left\{{  S_\mu(Z, X, V, Y ; \gamma)}\mid R=1\right\}=0,
\end{align}
  where $${  S_\mu(Z, X, V, Y ; \gamma)}=  {\frac{\partial \mu(Z,X, V ; \gamma)}{\partial \gamma}\left\{Y-\mu(Z,X,V ; \gamma)\right\}}.$$ For the imputation model, we consider the following estimating equation for  $\theta$,  
\begin{align}
    \label{eq:den-ee}
 \E\left\{{  S_f( X, V  ; \theta)}\mid R=1\right\}=0,
\end{align}
where  $$    S_f(X, V ; \theta)= \dfrac{\partial\log f\left(  V\mid X,R=1 ; \theta\right) }{\partial\theta}.$$
  In addition, let  $\widehat{\alpha}$, $\widehat{\beta} $, $\widehat{\gamma} $ and $\widehat{\theta} $ be the estimators solving the corresponding empirical estimating equations based on linked cohort, with probability limits $\alpha^\star$,  $\beta^\star $,  $\gamma^\star$ and  ${\theta^\star} $,  respectively.
  
\begin{proposition}
Under Assumptions~\ref{ass:ignorability}-\ref{eq: indep-R},    the influence function of  $ \widehat{\tau}_{\MR} ( \widehat{\beta}, \widehat{\gamma}, \widehat{\theta}) $ is given by
\begin{align*}
\begin{aligned}
\psi_{\MR}^{\tau}(O)
 & ={S_1(\beta^\star, \gamma^\star, \theta^\star)}+\frac{R}{\rho} S_2(\beta^\star, \gamma^\star, \theta^\star),
\end{aligned}
\end{align*} 
where  
\begin{gather*}S_1(\beta,\gamma,\theta)  =  \delta(1,X;\gamma,\theta)-\delta(0,X;\gamma,\theta)-\tau,\\
    \begin{aligned}
     S_2(\beta,\gamma,\theta)&= \begin{bmatrix}
    \dfrac { Z}{ \pi(X,V;\beta)  }\left\{Y-\mu(1,X,V;\gamma)\right\}+\mu(1,X,V;\gamma)-\delta(1,X;\gamma,\theta) \\\addlinespace[1.5mm]-\dfrac { 1-Z}{ 1-\pi(X,V;\beta)  }\left\{Y-\mu(0,X,V;\gamma)\right\}-\mu(0,X,V;\gamma)+\delta( 0,X;\gamma,\theta)  
\end{bmatrix}\\&~~~~~~ + \E\left\{\frac{\partial \widehat{\tau}_{\MR}\left(\beta^\star, \gamma^\star, \theta^\star\right)}{\partial \beta }\right\} \E\left\{\frac{\partial S_\pi\left(Z, X, V; \beta^\star\right)}{\partial \beta }\right\}^{-1} S_{\pi  }(Z, X, V ; \beta)  \\&~~~~~~+ \E\left\{\frac{\partial \widehat{\tau}_{\MR}\left(\beta^\star, \gamma^\star, \theta^\star\right)}{\partial \gamma }\right\} \E\left\{\frac{\partial S_\mu\left(Z, X, V, Y ; \gamma^\star\right)}{\partial \gamma }\right\}^{-1}S_{\mu  }(Z, X, V,Y ; \gamma),
\end{aligned} \\   \frac{\partial \widehat{\tau}_{\MR}\left(\beta^\star, \gamma^\star, \theta^\star\right)}{\partial \beta } = \frac{R}{\rho}\begin{Bmatrix}
    \dfrac { Z(Y-\mu_1^\star )\Dot{\pi}^\star}{ (\pi^\star ) ^2}    -\dfrac { (1-Z)(Y-\mu_1^\star )\Dot{\pi}^\star}{ (1-\pi^\star)^2
  } 
\end{Bmatrix}, \\\frac{\partial \widehat{\tau}_{\MR}\left(\beta^\star, \gamma^\star, \theta^\star\right)}{\partial \gamma } 
=\frac{R}{\rho}\left\{ \left(1-
    \dfrac { Z}{ \pi^\star  }\right) \Dot{\mu}_1^\star- \Dot{\delta}_{1,\gamma}^\star-\left(1-
    \dfrac { 1-Z}{ 1-\pi^\star  }\right) \Dot{\mu}_0^\star+ \Dot{\delta}_{0,\gamma}^\star\right\}+ \Dot{\delta}_{1,\gamma}^\star-  \Dot{\delta}_{0,\gamma}^\star.
\end{gather*}   
  $$\pi^\star=\pi\left(X, V ; \beta^\star\right), ~\dot{\pi}^\star=\frac{\partial \pi\left(X, V ; \beta^\star\right) }{\partial \beta} ,~ S_{\pi  }^\star= \frac{R S_\pi(Z, X, V ; \gamma^\star)}{\rho}.$$  $$\mu_{z }^\star=\mu \left(z,X, V ; \gamma^\star\right), ~\dot{\mu}_{z }^\star=\frac{\partial \mu_z\left(X, V ; \gamma^\star\right) }{\partial \gamma },~S_{\mu  }^\star=\frac{RS_\mu\left(Z, X, V, Y ; \gamma^\star\right)}{\rho}.$$  $$\dot{\delta}_{z,\gamma  }^\star=\frac{\partial \delta_z\left(X, V ; \gamma^\star, \theta^\star\right)}{\partial \gamma} ,~\dot{\delta}_{z,\theta}^\star=\frac{\partial \delta_z\left(X, V ; \gamma^\star,\theta^\star\right)}{\partial \theta} .$$  
\end{proposition}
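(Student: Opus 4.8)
The plan is to view $\widehat\tau_{\MR}(\widehat\beta,\widehat\gamma,\widehat\theta)=\Pn\{\widehat\varphi_1(O)-\widehat\varphi_0(O)\}$, with $\widehat\varphi_z$ as in \eqref{eq:est-expre}, as the terminal coordinate of a stacked $Z$-estimator whose nuisance coordinates $\widehat\alpha,\widehat\beta,\widehat\gamma,\widehat\theta$ solve the linked-cohort estimating equations \eqref{eq:rho-ee}, \eqref{eq:pi-ee}, \eqref{eq:mu-ee} and \eqref{eq:den-ee}; under Assumption~\ref{eq: indep-R} the selection probability is the constant $\rho$, so \eqref{eq:rho-ee} reduces to $\widehat\rho=\Pn(R)$. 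Standard M-estimation theory \citep{newey1994large} then gives, under the stated regularity conditions, that $\widehat\tau_{\MR}-\tau$ is asymptotically linear, its influence function being the ``oracle'' part $\varphi_1(O)-\varphi_0(O)-\tau$ obtained by freezing all nuisances at their probability limits $(\alpha^\star,\beta^\star,\gamma^\star,\theta^\star)$, plus one correction term per estimated nuisance $\eta\in\{\rho,\beta,\gamma,\theta\}$, each of the usual form ${-}\,\E\{\partial_\eta(\widehat\varphi_1-\widehat\varphi_0)\}$ times the influence function of $\widehat\eta$.

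The first step is to read off the oracle part: evaluated at $(\alpha^\star,\beta^\star,\gamma^\star,\theta^\star)$, expression \eqref{eq: phiz-exp} separates into the $R$-free piece $\delta(1,X;\gamma^\star,\theta^\star)-\delta(0,X;\gamma^\star,\theta^\star)-\tau$, which is exactly $S_1(\beta^\star,\gamma^\star,\theta^\star)$, and the $R/\rho$-weighted remainder, which is precisely the first bracketed block of $S_2(\beta^\star,\gamma^\star,\theta^\star)$. The second step is to supply the nuisance influence functions: since each of $\widehat\beta,\widehat\gamma,\widehat\theta$ is a $Z$-estimator based on the $R=1$ subsample, a one-term Taylor expansion yields, for instance, $\widehat\beta-\beta^\star=-[\E\{R\,\partial_\beta S_\pi\}]^{-1}\Pn(RS_\pi)+o_p(n^{-1/2})$, and $\E\{R\,\partial_\beta S_\pi\}=\rho\,\E\{\partial_\beta S_\pi\mid R=1\}=\rho\,\E\{\partial_\beta S_\pi\}$, the last equality using $R\indep(Z,X,V,Y)$ from Assumption~\ref{eq: indep-R} to drop the conditioning; hence the influence contribution of $\widehat\beta$ carries the factor $R/\rho$, and likewise for $\widehat\gamma$ with $S_\mu$ and $\widehat\theta$ with $S_f$. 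Plugging these in, together with the explicit derivatives $\partial_\beta\widehat\tau_{\MR}$ and $\partial_\gamma\widehat\tau_{\MR}$ displayed in the statement, reproduces the $S_\pi$- and $S_\mu$-blocks of $S_2$, each multiplied by $R/\rho$.

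The crux is the two cancellations that collapse the expansion to $S_1+(R/\rho)S_2$. For $\widehat\theta$: differentiating \eqref{eq:est-expre} in $\theta$ gives $\partial_\theta\widehat\varphi_z=(1-R/\rho)\,\partial_\theta\delta(z,X;\gamma,\theta)$, and since $\partial_\theta\delta(z,X;\gamma^\star,\theta^\star)$ depends on the data only through $X$ while $R\indep X$ under Assumption~\ref{eq: indep-R}, we get $\E\{(1-R/\rho)\,\partial_\theta\delta(z,X;\gamma^\star,\theta^\star)\}=\E\{\partial_\theta\delta\}\,\E(1-R/\rho)=0$; thus the imputation model contributes nothing and no $S_f$ term survives, explaining why $S_2$ has no $\theta$-block. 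For $\widehat\rho$ (equivalently $\widehat\alpha$): $\partial_\rho\widehat\varphi_z=-\rho^{-2}R\,[\,\mathbb{I}(Z=z)\{Y-\mu(z,X,V;\gamma^\star)\}/\{\pi^z(1-\pi)^{1-z}\}+\mu(z,X,V;\gamma^\star)-\delta(z,X;\gamma^\star,\theta^\star)\,]$, and under the operative submodel — in particular when all working models are correctly specified — the conditional expectation of the bracket given $R=1$ vanishes by iterated expectations, so $\E\{\partial_\rho\widehat\varphi_z\}=0$ and estimating $\rho$ contributes nothing either; this is what legitimizes writing $\widehat\tau_{\MR}(\widehat\beta,\widehat\gamma,\widehat\theta)$. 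Collecting the oracle part with the surviving $\widehat\beta$- and $\widehat\gamma$-corrections and regrouping the $R/\rho$ factors then gives $\psi_{\MR}^\tau(O)=S_1(\beta^\star,\gamma^\star,\theta^\star)+(R/\rho)\,S_2(\beta^\star,\gamma^\star,\theta^\star)$ with the stated $S_2$. I expect the only real difficulty to be computational bookkeeping — differentiating $\delta(z,X;\gamma,\theta)=\int\mu(z,X,v;\gamma)f(v\mid X;\theta)\,dv$ through both arguments to obtain $\dot\delta_{z,\gamma}^\star$ and $\dot\delta_{z,\theta}^\star$, assembling the Jacobians $\E\{\partial_\beta S_\pi\}^{-1}$ and $\E\{\partial_\gamma S_\mu\}^{-1}$, and tracking signs through each Taylor step so that the cross terms fuse exactly into the displayed $S_2$ — rather than anything conceptually delicate.
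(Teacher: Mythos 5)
Your proposal is correct and follows essentially the same route as the paper's proof: a Taylor (M-estimation) expansion of $\widehat{\tau}_{\MR}$ in $(\beta,\gamma,\theta)$ around the probability limits, with the oracle term giving $S_1+(R/\rho)$ times the first block of $S_2$, the $\widehat\beta$- and $\widehat\gamma$-corrections entering with the $R/\rho$ factor because the nuisance estimating equations are restricted to the linked cohort, and the $\widehat\theta$-correction vanishing since $\partial_\theta\widehat\tau_{\MR}=(1-R/\rho)(\dot\delta_{1,\theta}-\dot\delta_{0,\theta})$ has mean zero under $R\indep X$. Your additional remark that the $\widehat\rho$-correction also vanishes (under the operative submodel) is a detail the paper simply sidesteps by expanding only in $(\beta,\gamma,\theta)$, so it supplements rather than changes the argument.
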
 
\begin{proof} 
  We write $\widehat{\tau}_{\MR}=\widehat{\tau}_{\MR}(\widehat{\beta}, \widehat{\gamma}, \widehat{\theta})$ to emphasize its dependence on the parameter estimates $(\widehat{\beta}, \widehat{\gamma}, \widehat{\theta})$. By the Taylor expansion, 
\begin{equation*}
    \begin{aligned}
\widehat{\tau}_{\MR}&(\widehat{\beta}, \widehat{\gamma}, \widehat{\theta})\\
\cong & \widehat{\tau}_{\MR}\left(\beta^\star, \gamma^\star, \theta^\star\right)+\E\left\{\frac{\partial \widehat{\tau}_{\MR}\left(\beta^\star, \gamma^\star, \theta^\star\right)}{\partial \beta }\right\}\left(\widehat{\beta}-\beta^\star\right) \\
& +\E\left\{\frac{\partial \widehat{\tau}_{\MR}\left(\beta^\star, \gamma^\star, \theta^\star\right)}{\partial \gamma^\T}\right\}\left(\widehat{\gamma} -\gamma^\star\right) +\E\left\{\frac{\partial \widehat{\tau}_{\MR}\left(\beta^\star, \gamma^\star, \theta^\star\right)}{\partial \theta^\T}\right\}\left(\widehat{\theta} -\theta^\star\right)~~~~~~~~~~~~~~~~~~~~~\\\cong & \widehat{\tau}_{\MR}\left(\beta^\star, \gamma^\star, \theta^\star\right) \\
& +n ^{-1} \sum_{ j } \E\left\{\frac{\partial \widehat{\tau}_{\MR}\left(\beta^\star, \gamma^\star, \theta^\star\right)}{\partial \beta }\right\} \E\left\{\frac{\partial S_\pi\left(Z, X, V; \beta^\star\right)}{\partial \beta }\right\}^{-1} { \frac{R}{\rho}}S_\pi\left(Z_j, X_j, V_j ; \beta^\star\right)\\
& +n ^{-1} \sum_{ j } \E\left\{\frac{\partial \widehat{\tau}_{\MR}\left(\beta^\star, \gamma^\star, \theta^\star\right)}{\partial \gamma^\T}\right\} \E\left\{\frac{\partial S_\mu\left(Z, X, V, Y ; \gamma^\star\right)}{\partial \gamma^\T}\right\}^{-1}{ \frac{R}{\rho}} S_\mu\left(Z_j, X_j, V_j, Y_j ; \gamma^\star\right) \\
& +n ^{-1} \sum_{ j } \E\left\{\frac{\partial \widehat{\tau}_{\MR}\left(\beta^\star, \gamma^\star, \theta^\star\right)}{\partial \theta^\T}\right\} \E\left\{\frac{\partial S_f\left( X, V ; \theta^\star\right)}{\partial \theta^\T}\right\}^{-1} { \frac{R}{\rho}}S_f\left(  X_j, V_j  ; \theta^\star\right)  { .}
\end{aligned}
\end{equation*}
where
  \begin{gather*}
     \frac{\partial \widehat{\tau}_{\MR}\left(\beta^\star, \gamma^\star, \theta^\star\right)}{\partial \beta^\T}= \frac{R}{\rho}\begin{Bmatrix}
    \dfrac { Z(Y-\mu_1^\star )\Dot{\pi}^\star}{ (\pi^\star ) ^2}    -\dfrac { (1-Z)(Y-\mu_1^\star )\Dot{\pi}^\star}{ (1-\pi^\star)^2
  } 
\end{Bmatrix} {,}\\
     \frac{\partial \widehat{\tau}_{\MR}\left(\beta^\star, \gamma^\star, \theta^\star\right)}{\partial \gamma^\T}=\frac{R}{\rho}\left\{ \left(1-
    \dfrac { Z}{ \pi^\star  }\right) \Dot{\mu}_1^\star- \Dot{\delta}_{1,\gamma}^\star-\left(1-
    \dfrac { 1-Z}{ 1-\pi^\star  }\right) \Dot{\mu}_0^\star+ \Dot{\delta}_{0,\gamma}^\star\right\}+ \Dot{\delta}_{1,\gamma}^\star-  \Dot{\delta}_{0,\gamma}^\star  { ,}\\\frac{\partial \widehat{\tau}_{\MR}\left(\beta^\star, \gamma^\star, \theta^\star\right)}{\partial \theta^\T}=\frac{R}{\rho}\left( - \Dot{\delta}_{1,\theta}^\star + \Dot{\delta}_{0,\theta}^\star\right)+ \Dot{\delta}_{1,\theta}^\star-  \Dot{\delta}_{0,\theta}^\star { .}
  \end{gather*}
Under submodel  $\cup_{j=1}^3\mathcal{M}_j$,    we have 
\begin{align*}
\E\left\{\frac{\partial \widehat{\tau}_{\MR}\left(\beta^\star, \gamma^\star, \theta^\star\right)}{\partial \theta^\T}\right\}=0.
\end{align*} 
Therefore, 
\begin{align*}
\begin{aligned}
\widehat{\tau}_{\MR} (\widehat{\beta}, \widehat{\gamma}, \widehat{\theta})  &
\cong \widehat{\tau}_{\MR}\left(\beta^\star, \gamma^\star, \theta^\star\right) \\
& +n ^{-1} \sum_{ j } \E\left\{\frac{\partial \widehat{\tau}_{\MR}\left(\beta^\star, \gamma^\star, \theta^\star\right)}{\partial \beta }\right\} \E\left\{\frac{\partial S_\pi\left(Z, X, V; \beta^\star\right)}{\partial \beta }\right\}^{-1}{ \frac{R}{\rho}} S_\pi\left(Z_j, X_j, V_j ; \beta^\star\right) \\
& +n ^{-1} \sum_{ j } \E\left\{\frac{\partial \widehat{\tau}_{\MR}\left(\beta^\star, \gamma^\star, \theta^\star\right)}{\partial \gamma^\T}\right\} \E\left\{\frac{\partial S_\mu\left(Z, X, V, Y ; \gamma^\star\right)}{\partial \gamma^\T}\right\}^{-1} { \frac{R}{\rho}}S_\mu\left(Z_j, X_j, V_j, Y_j ; \gamma^\star\right) .
\end{aligned}
\end{align*} Thus, we   derive the inﬂuence function for the estimator  $\widehat\tau_{\MR}$ . 
\end{proof}
\subsection{Optimal sampling allocation}
Under the submodel $\mathcal{M}_1\cup\mathcal{M}_2$, the goal is to minimize \eqref{eq:var-tau-rand}. We first calculate the asymptotic variance, 
\begin{align*} 
{ \mathrm{asyvar}}\{\widehat\tau_\MR(\widehat{\beta}, \widehat{\gamma}, \widehat{\theta})\}&=\frac{\E\{\psi_{\MR}^{\tau}(O)\}^2}{n}\\&=  \frac{\E({S_2^2}  \mid R=1)}{n \rho
} +\frac{2 \E( {S_2 S_1}  \mid R=1) +\E\left(S_1^2\right) }{n} \\ &=  \frac{C_1+\rho C_2}{C}\left(  { \Gamma_1 }+\frac{ \Gamma_2  }{\rho} \right) \\&= \left(\frac{C_1}{C}+\frac{C_2}{C} \rho\right)\left(\Gamma_1+\frac{\Gamma_2}{p}\right) \\
 &=\frac{C_1\Gamma_1}{C} +\frac{C_1}{C} \frac{\Gamma_2}{\rho}+\frac{C_2}{C} \rho \Gamma_1+\frac{C_2 \Gamma_2}{C} \\
  &=\frac{C_1\Gamma_1}{C} +\frac{C_2 \Gamma_2}{C}+ \frac{C_1\Gamma_2}{C}\frac{1}{\rho}+\frac{C_2\Gamma_1}{C} \rho ,
\end{align*}
where the first equlity holds due to 
\begin{align*}
     \E\{\psi_{\MR}^{\tau}(O)\}^2 &= \E\left(\frac{RS_2}{\rho} +S_1\right)^2\\&=   \E\left(\frac{R}{\rho^2} S_2^2+2\frac{R}{e} S_2 S_1+S_1^2\right)  \\&= \frac{ \E({S_2^2}  \mid R=1)}{\rho} +{2 \E( {S_2 S_1}  \mid R=1) +\E\left(S_1^2\right) }{ ,}
\end{align*}
Let 
\begin{align*}
   \Gamma_1&=\E\left\{S_1^2(\beta^\star, \gamma^\star, \theta^\star)\right\} +2 \E\{ S_2(\beta^\star, \gamma^\star, \theta^\star) S_1(\beta^\star, \gamma^\star, \theta^\star)\mid R=1\}  ,\\
   \Gamma_2&=\E\left\{S_2^2(\beta^\star, \gamma^\star, \theta^\star)\mid R=1\right\}.
\end{align*} 
Then the optimal sample size allocation is,
\begin{align*} 
    \rho^\star= \begin{cases}\sqrt{\dfrac{\Gamma_2 C_1}{\Gamma_1 C_2}}, & \text { if } \Gamma_2 C_1<\Gamma_1 C_2, \\ 1, & \text { if } \Gamma_2 C_1\geq \Gamma_1 C_2,\end{cases}~~~   n^\star= \frac{C}{C_1+\rho^\star C_2}.
\end{align*}

 \section{The proof of Proposition \ref{PROP:CRR-IDENTIF}}
The proof for the three nonparametric identification strategies for $\xi$ follows a similar structure to the proof of Theorem \ref{THEOREM:IDENTIFIABLE}, and  we omit it for simplicity.
 \section{The proof of theorem \ref{THM: MTYPE-EIF}}

 \begin{proof}
     
We now consider the EIF for   $\xi=\tau_1/\tau_0$, where $\tau_z=\E(Y_z)$, so that 
\begin{align*}
      \varphi_{\mathrm{eff},z}(O)&= \frac R{\rho(X)}\left[\frac {\mathbb{I}(Z=z)\left\{Y-\mu(z,X,V)\right\}}{\{\pi(X,V)\}^2\{1-\pi(X,V)\}^{1-z}}+\mu(z,X,V)- \delta(z,X)\right]  + \delta(z,X)  -\tau_z.
\end{align*}
Let $  \varphi_{z}(O)=\varphi_{\mathrm{eff},z}(O)+\tau_z$. 
By the standard Gateaux derivative approach \citep{kennedy2023semiparametric},  we have
\begin{align*} 
 \mathbb{I} \mathbb{F}(\xi)&= \mathbb{I F}\left(\frac{\tau_{1}}{\tau_{0}}\right) \\&=\frac{\mathbb{I} \mathbb{F}\left(\tau_{1}\right)}{\tau_{0}}-\left(\frac{\tau_{1}}{\tau_{0}}\right) \frac{\mathbb{\mathbb { IF }}\left(\tau_{0}\right)}{\tau_{0}} \\&  = \frac{\varphi_{\mathrm{eff},1}(O) -\xi{\varphi_{\mathrm{eff},0} }(O) }{\tau_{0}}\\&=
 \frac{\varphi_{1}(O)-\tau_1}{\tau_0}-\xi\left\{\frac{\varphi_{0}(O)-\tau_{0}}{\tau_{0}}\right\}\\&=
 \frac{\varphi_{1}(O)-\xi\varphi_{0}(O) }{\tau_{0}}.
\end{align*}

 \end{proof}
 \section{The proof of Theorem \ref{THM: MTYPE-MUL-ROBUSTNESS}}
 The proofs for the    triple robustness property for $\widehat\xi_\MR$ are similar to the proof of Theorem \ref{THM: MUL-ROBUSTNESS} and hence omitted. 

  To   demonstrate its local efficiency, we consider the one-step estimator $$\widehat{\xi}_{0}=\Pn\{{\widehat{\varphi}_{1}(O)-\xi\widehat{\varphi}_{0}}(O)\}\big/\tau_0 +\xi.$$  
 Since $\widehat  \xi_{0}$ can achieve the local efficiency, and the second claim follows that 
 \begin{align*}
  \notag  \sqrt{n}(  \widehat \xi_{\MR}-\widehat \xi_{0})&=\sqrt{n} \Pn\{{\widehat\varphi_{1}(O)-\xi\widehat\varphi_{0} }(O)\}\left[\frac{1}{\Pn \{\widehat\varphi_{0} (O)\}} -\frac{1}{\tau_0}\right] =o_p(n^{-1/2}).
 \end{align*}
 \section{Additional simulation studies for a binary outcome}
  In this section, we investigate the ﬁnite-sample performance for  a 
  binary outcome.  We consider the following data-generating mechanism: 
 \begin{itemize} 
    \item[(a)] Observed covariate: $X\sim N(0,1)$.
    \item[(b)] Partially observed covariate: $V\mid X \sim N(1 + 0.75X,1)$.
    \item[(c)] Treatment assignment: $\pr(Z=1\mid X,V)=\mathrm{expit}(0.5+0.5X+0.5V)$.
    \item[(d)] Selection mechanism: $\pr(R=1\mid X)=\mathrm{expit}(0.75 + X )$.
    \item[(e)] Observed outcome: $\pr(Y=1 \mid Z,X,V)= \mathrm{expit}(-2+0.5X+0.5V+4Z+0.5ZX+0.5ZV)$. 
\end{itemize} We are interested in estimating the causal risk ratio $\xi$, with a true value of $4$. We also evaluate the   finite-sample properties of four semiparametric estimators: the first three estimators, as implied by Proposition \ref{PROP:CRR-IDENTIF}, which are analogous to the estimators $\widehat\tau_1$, $\widehat\tau_2$, and $\widehat\tau_3$; and the  triply robust estimator $\widehat{\xi}_{\MR}$.  The simulation results are presented in Table \ref{tab:binaryoutcome}. The  bias, empirical standard deviation, and the 95\% coverage probability are all multiplied by $10^2$.  
 \section{Application}
  \subsection{Demographic
Characteristic}
Table \ref{application}  provides summary statistics on the demographic characteristics in the primary dataset and linked cohort dataset. 
  	 \begin{landscape}
		\begin{table}
 \caption{Bias, standard deviation (SD), and 95\% coverage probability (CP) of the proposed estimators with a binary outcome under different sample sizes. The bias, SD and CP are multiplied by $10^2$.}
  \label{tab:binaryoutcome}
			\centering
			\begin{threeparttable}    
				\renewcommand{\arraystretch}{0.75}
				\resizebox{0.835\columnwidth}{!}{%
	\begin{tabular}{cccccccccccccccccccc}
    \toprule
    
    \specialrule{0em}{3pt}{3pt}
    
    \multicolumn{3}{c}{Model} & 
    \multicolumn{1}{c}{} & 
    \multicolumn{3}{c}{$\widehat\tau_{1}$} & 
    \multicolumn{1}{c}{} & 
    \multicolumn{3}{c}{$\widehat\tau_{2}$} & 
    \multicolumn{1}{c}{} & 
    \multicolumn{3}{c}{$\widehat\tau_{3}$}& 
    \multicolumn{1}{c}{} & 
    \multicolumn{3}{c}{$\widehat\tau_{\MR}$}  \\
    \specialrule{0em}{3pt}{3pt}
    
    \cmidrule(lr){1-3}	\cmidrule(lr){5-7} 	\cmidrule(lr){9-11}	\cmidrule(lr){13-15}	\cmidrule(lr){17-19}
    
    \multirow{2}{*}{$\mathcal{M}_1$} &	\multirow{2}{*}{$\mathcal{M}_2$} & 	\multirow{2}{*}{$\mathcal{M}_3$} 
    &\multicolumn{1}{c}{}
    &\multicolumn{3}{c}{Sample Size}
    &\multicolumn{1}{c}{}
    &\multicolumn{3}{c}{Sample Size} &\multicolumn{1}{c}{}
    &\multicolumn{3}{c}{Sample Size} &\multicolumn{1}{c}{}
    &\multicolumn{3}{c}{Sample Size} \\
    \cmidrule(lr){5-7} 	\cmidrule(lr){9-11}	\cmidrule(lr){13-15}	\cmidrule(lr){17-19}
    & &	 &&1000  & 5000 & 10000 	&& 1000  & 5000 & 10000	&& 1000  & 5000 & 10000 && 1000  & 5000 & 10000 \\
    \specialrule{0em}{0pt}{3pt}
    \midrule
    
    &Bias & &&&& &  & & & & & & & & & & &\\
    $\checkmark $& $\checkmark $& $\checkmark $  && 19 & 2 & 0 && 9 & 3 & 0 && 9 & 2 & 0 && 12 & 2 & 0 \\
    $\checkmark $ & $\times$& $\times$ && 19 & 2 & 0 && 91 & 83 & 78 && 68 & 60 & 55 && 13 & 2 & 0 \\
    $\times$ & $\checkmark $& $\times$ && 94 & 78 & 73 && 9 & 3 & 0 && 76 & 69 & 65 && 12 & 3 & 0 \\
    $\times$ & $\times$& $\checkmark $&& 97 & 83 & 78 && -35 & -40 & -43 && 9 & 2 & 0 && 11 & 3 & 0 \\
    $\times$ & $\times$& $\times$   && 97 & 83 & 78 && 67 & 59 & 54 && 68 & 60 & 55 && 60 & 51 & 46\\ \addlinespace[1mm]
    
    & {  SD} &  &  & & & & & & & & & & &&&&&\\
    $\checkmark $& $\checkmark $& $\checkmark $ && 81 & 34 & 23 && 65 & 28 & 21 && 65 & 28 & 21 && 71 & 29 & 21 \\
    $\checkmark $ & $\times$& $\times$ && 81 & 34 & 23 && 87 & 37 & 25 && 87 & 38 & 25 && 78 & 30 & 21 \\
    $\times$ & $\checkmark $& $\times$ && 95 & 38 & 25 && 65 & 28 & 21 && 79 & 39 & 27 && 70 & 30 & 21 \\
    $\times$ & $\times$& $\checkmark $  && 93 & 38 & 25 && 60 & 27 & 19 && 65 & 28 & 21 && 68 & 29 & 21 \\
    $\times$ & $\times$& $\times$  && 93 & 38 & 25 && 87 & 38 & 26 && 87 & 38 & 25 && 86 & 36 & 25  \\\addlinespace[1mm]
    
    & { CP} &  &  & & & & & & & & & & &&&&&\\
    $\checkmark $& $\checkmark $& $\checkmark $ && 97  & 95 & 96 && 97  & 95 & 93 && 97  & 95  & 94 && 95 & 95  & 95 \\
    $\checkmark $ & $\times$& $\times$ && 97 & 95 & 96 && 99 & 41 & 65 && 99 & 71 & 44 && 96 & 96 & 97 \\
    $\times$ & $\checkmark $& $\times$ && 98 & 51 & 16 && 97 & 95 & 93 && 100 & 67 & 41 && 97 & 95 & 96 \\
    $\times$ & $\times$& $\checkmark $&& 99 & 44 & 80 && 85 & 64 & 33 && 97 & 95 & 94 && 97 & 95 & 96 \\
    $\times$ & $\times$& $\times$&  & 99 & 44 & 80 && 99 & 71 & 48 && 99 & 71 & 44 && 99 & 81 & 61\\
    
    \bottomrule
\end{tabular}
} 
			\end{threeparttable}       
		\end{table}
	\end{landscape}

	\begin{landscape}
		\centering
			\begin{ThreePartTable}
		
				\begin{TableNotes}
					\small              
				\item[1] Values are expressed as median (interquartile range)     
				\item[2] Values are expressed as median (interquartile range)     
			\end{TableNotes} 
            \setlength\tabcolsep{1pt}{
			\begin{longtable}{lcccccccccccc}
		\caption{Summary statistics of the demographic characteristics
between the primary data and the linked data.}\label{application} \\
				\toprule
				\multirow{3}*{} &\multicolumn{1}{c}{\quad\quad\quad}& \multicolumn{4}{c}{Physical activity} & \multicolumn{2}{c}{\quad\quad\quad}&\multicolumn{4}{c}{No physical activity}    \\
				\cmidrule(lr){3-6}\cmidrule(lr){9-12} 
                 &\multicolumn{1}{c}{}
				& \multicolumn{2}{c}{Primary data(9174)} &\multicolumn{2}{c}{Linked data(2661)}&\multicolumn{2}{c}{}
				&\multicolumn{2}{c}{Primary data(9600)}
				&\multicolumn{2}{c}{Linked data(2565)}\\
				\cmidrule(lr){3-4}\cmidrule(lr){5-6} \cmidrule(lr){9-10} \cmidrule(lr){11-12} 
				
				&&No.  & $\%$ &No.  & $\%$& & &No.  & $\%$ &No.  & $\%$  \\
				
				\midrule
				\endfirsthead 
				\caption{Summary statistics of the demographic characteristics
between the primary data and the linked  data.}\\
					\toprule
				\multirow{3}*{} &\multicolumn{1}{c}{\quad\quad\quad}& \multicolumn{4}{c}{Physical activity} & \multicolumn{2}{c}{\quad\quad\quad}&\multicolumn{4}{c}{No physical activity}    \\
				\cmidrule(lr){3-6}\cmidrule(lr){9-12} 
                 &\multicolumn{1}{c}{}
				& \multicolumn{2}{c}{Primary data(9174)} &\multicolumn{2}{c}{Linked data(2661)}&\multicolumn{2}{c}{}
				&\multicolumn{2}{c}{Primary data(9600)}
				&\multicolumn{2}{c}{Linked data(2565)}\\
				\cmidrule(lr){3-4}\cmidrule(lr){5-6} \cmidrule(lr){9-10} \cmidrule(lr){11-12} 
				
				&&No.  & $\%$ &No.  & $\%$& & &No.  & $\%$ &No.  & $\%$  \\

				\midrule
	\endhead
\midrule
\endfoot
\bottomrule
\insertTableNotes         
\endlastfoot

				Age\tnote{1} && 49(30) &  &40(25)  && &&51(29) && 41(24) & \\
				Male && 4670  & 50.9  & 1385  & 52.0 & & & 3987   &41.5   & 1016   & 39.6   \\
				Race && & & & & & & &  & & \\
				\quad White& &7090 &77.3 &2032 &76.4 &&& 7156 &74.6 & 1888 &73.6\\
				\quad Black& &1213  &13.2  &380 &14.3 & && 1585  &16.5  &471 &18.4\\
	         	\quad Amer Indian/Hawaiian/Alaska Native& &80  &0.9  &37  &1.4 & & & 68 &0.7   &27  &1.1\\
				\quad Asian/NATV Hawaiian/PACFC ISL &&515  &5.6  &113  &4.2 && & 512  &5.3  &101  &3.9\\
				\quad Multiple races &&276  &3.0  &99  &3.7 &&  &279  &2.9  &78  &3.0\\
				Marriage status && & & && & & & &  &  \\
				\quad Married &&4743 &51.7  &1111 &45.5& && 4839 &50.4 & 1207 &47.1\\
				\quad Widowed& &568  &6.2  &60 &2.3 & && 810  &8.4  &66 &2.6\\
				\quad Divorced &&1198  &13.1 &387  &14.5 && &1223 &12.7   &326  &12.7\\
				\quad Separated &&205  &2.2  &60  &2.3 && &263  &2.8  &73  &2.8\\
				\quad Never married &&2460 &26.8  &943  &35.4 & & &2465  &25.7 &893  &34.8\\
				BMI\tnote{2} &&26.3(7.6) & & 26.7(8.1)& &&& 27.4(9.6)& & 28.1(10.0)& \\
				Education &&& & & & & & & & &   \\
				\quad No degree &&1170 &12.8 &302 &11.3 &&& 1544 &16.1 & 383 &14.9\\
				\quad General educational development &&404  &4.4  &162 &6.1  & &&381  &4.0  &113 &4.4\\
				\quad High school diploma &&3955  &43.1 &1373  &51.6& & &4044 &42.1   &1275  &49.7\\
				\quad Bachelor's degree &&1772 &19.3 &413  &15.5& & &1702 &17.7   &384  &15.0\\
				\quad Master's degree &&803 &8.8  &76  &2.9 & & &828  &8.6 &100 &3.9\\
				\quad Doctorate degree &&212 &2.3  &11  &0.4& & &200  &2.1 &12  &0.5\\
				\quad Other degree &&858 &9.3  &324  &12.2& & &901  &9.4 &298  &11.6\\\\
				Poverty of family &&& && & & & & & &    \\
				\quad Poor/negative &&1222 &13.3 &235 &8.8 &&& 1516 &15.8 & 247 &9.6\\
				\quad Near poor &&379  &4.1 &108 &4.1&  && 496 &5.2  &131 &5.1\\
				\quad Low income &&1208  &13.2 &444 &16.7& & &1400 &14.6  &441 &17.2\\
				\quad Middle income &&2653  &28.9  &976 &36.7& & &2755 &28.7  &956 &37.3\\
				\quad High income &&3712 &40.5 &898  &33.7& & &3433 &35.7 &790  &30.8\\
				Smoking && 1765&19.2 &678 &25.5 &&& 1798& 18.7& 556&21.7\\
					Cancer diagnosis && 1031&11.2 &164 &6.2 &&& 1171& 12.2& 149&5.8\\
						Health insurance coverage && 8404&91.6 &2408 &90.5 &&& 8765& 91.3& 2285&89.1\\ 
		\end{longtable}}
		
		\end{ThreePartTable}
	\end{landscape} 
\end{document}